\documentclass[12pt]{article}

\usepackage{amsfonts}
\usepackage{amsmath}
\usepackage{amsthm}
\usepackage{graphicx}
\usepackage{subfig}
\usepackage{caption}
\usepackage{xcolor}

\usepackage{hyperref}
\usepackage[capitalise]{cleveref}

\usepackage{float}
\usepackage{natbib}
\usepackage{cases}
\usepackage{booktabs}

\usepackage{algorithm}
\usepackage[noend]{algpseudocode}

\numberwithin{equation}{section}

\setcounter{MaxMatrixCols}{10}

\setlength{\topmargin}{-1.0in}
\setlength{\textheight}{9.25in}
\setlength{\oddsidemargin}{0.0in}
\setlength{\evensidemargin}{0.0in}
\setlength{\textwidth}{6.5in}

\def\theenumi{\arabic{enumi}}

\def\theenumii{\alph{enumii}}
\def\p@enumii{\theenumi.}

\def\theenumiii{\arabic{enumiii}}
\def\p@enumiii{(\theenumi)(\theenumii)}

\def\p@enumiv{\p@enumiii.\theenumiii}
\pagestyle{plain}

\newtheorem{theorem}{Theorem}

\newtheorem{corollary}{Corollary}

\newtheorem{definition}{Definition}

\newtheorem{lemma}{Lemma}

\newtheorem{proposition}{Proposition}
\newtheorem{remark}{Remark}

\numberwithin{theorem}{section}
\numberwithin{acknowledgement}{section}
\numberwithin{algorithm}{section}
\numberwithin{axiom}{section}
\numberwithin{case}{section}
\numberwithin{claim}{section}
\numberwithin{conclusion}{section}
\numberwithin{condition}{section}
\numberwithin{conjecture}{section}
\numberwithin{corollary}{section}
\numberwithin{criterion}{section}
\numberwithin{definition}{section}
\numberwithin{example}{section}
\numberwithin{exercise}{section}
\numberwithin{lemma}{section}
\numberwithin{notation}{section}
\numberwithin{problem}{section}
\numberwithin{proposition}{section}
\numberwithin{remark}{section}
\numberwithin{solution}{section}
\numberwithin{summary}{section}

\newcommand{\dd}{\mathop{}\!\mathrm{d}} % Derivatives
\def\ee{\mathrm{e}} % Euler's constant

\newcommand{\tciFourier}{\mathcal{F}}

\usepackage{enumitem}

\begin{document}

\title{Arbitrage-free catastrophe reinsurance valuation for compound dynamic contagion claims}
    
\author{Jiwook Jang\footnote{Department of Actuarial Studies \& Business Analytics, Macquarie
Business School, Macquarie University, Sydney NSW 2109, Australia}~~ Patrick J. Laub\footnote{School of Risk and Actuarial Studies, UNSW Business School,
University of New South Wales, Sydney, NSW 2052, Australia, E-mail:
p.laub@unsw.edu.au}~~ Tak Kuen Siu\footnote{Department of Actuarial Studies \& Business Analytics, Macquarie
Business School, Macquarie University, Sydney NSW 2109, Australia}
~~ Hongbiao Zhao\footnote{School of Statistics and Data Science, Shanghai University of Finance and Economics, No. 777 Guoding Road, Shanghai  200433, China}
}
\maketitle

\begin{abstract}
In this paper, we consider catastrophe stop-loss reinsurance valuation for a reinsurance company with dynamic contagion claims. To deal with conventional and emerging catastrophic events, we propose the use of a compound dynamic contagion process for the catastrophic component of the liability. Under the premise that there is an absence of arbitrage opportunity in the market, we obtain arbitrage-free premiums for these contracts. To this end, the Esscher transform is adopted to specify an equivalent martingale probability measure. We show that reinsurers have various ways of levying the security loading on the net premiums to quantify the catastrophic liability in light of the growing challenges posed by emerging risks arising from climate change, cyberattacks, and pandemics. We numerically compare arbitrage-free catastrophe stop-loss reinsurance premiums via the Monte Carlo simulation method. We also compare them with those from generalised compound Hawkes/compound Cox cases. Sensitivity analyses are performed by changing the retention level, the Esscher parameters and the intensity parameters.
\end{abstract}

\noindent
\textbf{JEL Code}: G220 Insurance; Insurance Companies; Actuarial Studies \\
\textbf{Keywords}: Compound dynamic contagion process; arbitrage-free catastrophe reinsurance valuation; equivalent martingale probability measure; the Esscher transform; Monte Carlo simulation.

\section{Introduction}

Huge losses in properties, motor vehicles and from the interruption of businesses occur due to conventional catastrophic events, such as flood, storm, hail, bushfire, earthquake, terrorism, and global war. Emerging cyber and pandemic events bring about catastrophic and far-reaching losses from business interruption.

The estimated cost of claims related to February and March 2022 South-East Queensland and New South Wales floods, which is the third most costly extreme weather event ever recorded in Australia, by the
Insurance Council of Australia (ICA) is AU\$4.8 billion. Insurers' losses from October 2022 Hurricane Ian, one of the most powerful storms ever
recorded in the US, are expected to be between US\$42 billion and US\$57
billion. In July 2023, we have seen extreme weather events, i.e., floods in Asia, wildfires in Canada, and deadly heatwaves in southern Europe and the US, which can impact economies badly with huge financial cost.

Hurricane Helene slammed into Florida's Gulf coast causing deadly flooding, dumping almost unprecedented amounts of rain in September 2024. AccuWeather Inc. estimated that total damage and economic loss could be between US\$145 billion and US\$160 billion. Spain has been affected by record-breaking rainfall and severe flooding in Valencia in late October 2024. The Southern California wildfires of January 2025 illustrate the dynamics of claims. Moody's Risk Management Solutions (RMS) event response estimates that insured losses for Los Angeles firestorm events will likely range between US\$20 billion and US\$30 billion. These events present significant challenges to the financial stability of insurers and reinsurers, highlighting the need for improved models to predict claims from catastrophic events.

In May 2021 a ransomware cyberattack halted the US Colonial Pipeline 
by disrupting its computerised control systems. Nearly half of the US
East Coast's fuel supply depends on this pipeline, which transports gasoline,
diesel, jet fuel and other refined products. The Colonial Pipeline Company acknowledged that it paid a ransom of US\$4.4 million worth of bitcoin, though the US Department of Justice was able to recover 64 bitcoins worth US\$2.3 million. According to Travel, Logistics \& Infrastructure article, airlines suffered \$168
billion in economic losses in 2020, although the COVID-19 pandemic was
entering its endemic stages in some parts of the world \citep{bouwer2022taking}.

Insurers provide coverage to individuals and organisations by underwriting policies that protect against economic and financial losses resulting from catastrophic events. However, insurers cannot cover all losses, as extreme risks pose a challenge to their financial viability. To ensure being solvent in case of exceptional
huge losses, insurers transfer their risk portfolios by purchasing
catastrophe reinsurance contracts. 

The frequency and severity of both conventional and emerging catastrophes are increasing due to factors such as global warming, cyberattacks, and pandemics.
\cite{gurtler2016impact} observed that insurance premiums rose significantly following natural mega-catastrophes to mitigate the risk of financial losses for reinsurers. In March 2024, prior to the Southern California wildfires of January 2025, State Farm, the largest property insurer in the US, announced that it would not renew insurance policies in high-risk California postcodes due to escalating costs, including rising reinsurance premiums and construction expenses.

As a result, alternative point processes are needed to model claim and loss arrivals from both conventional and emerging catastrophic events, ensuring effective pricing of catastrophe insurance and reinsurance contracts. To articulate the ongoing challenge and complexity of new risk dynamics, we use a compound dynamic contagion
process (CDCP) \citep{dassios2011dynamic} for the catastrophic component of the liability of a reinsurer,
(i.e., the aggregate catastrophe loss covered by reinsurance contract), 
denoted by $C_{t}$. This makes our study unique, and to our knowledge,
using it for the catastrophic component of the liability of $C_{t}$ is the first contribution in the context of pricing for
catastrophe reinsurance contract.
For references on this compound point process, we refer to \cite{jang2021review} and relevant citations therein. Recently, \cite{LiuJang2025} applied enhanced dynamic contagion processes to study optimal asset allocation and reinsurance problems.

The CDCP is particularly well suited to modeling catastrophe reinsurance liabilities because its probabilistic structure directly reflects key empirical features of catastrophic loss arrivals. First, the self-exciting component captures claim clustering and loss amplification which are often observed after major catastrophes. Specifically, an initial event, (e.g., a severe flood or wildfire), generates subsequent related claims through infrastructure damage, supply-chain disruption, or aftershocks. Second, the external-excitation component allows for sudden exogenous shocks, such as hurricanes, pandemics, or cyber events, whose arrival times are not triggered by past insurance losses, but by external environmental or systemic factors. Third, the mean-reverting intensity ensures that periods of elevated risk eventually decay, consistent with the transient nature of catastrophe-driven claim activities.

To ensure risk-neutrality, we use an equivalent martingale probability measure $\tilde{\mathbb{P}}$ for the compound dynamic contagion process $C_{t}$, derived using the Esscher transform, to evaluate an arbitrage-free premium.
In the field of actuarial science, risk-neutral pricing
via the Esscher transform \citep{esscher1932probability} can be found, for example, in \cite{GerberShiu1994}, \cite{gerber1996actuarial}, \cite{BDES1996},
\cite{dassios2003pricing}, \cite{jang2004arbitrage}, \cite{STY2004}, and \cite{Siu2005}.

This paper considers the fair valuation of catastrophe stop-loss reinsurance contracts. We
arrange the paper as follows. In \cref{Sec_catastrophe_losses}, we provide a mathematical
definition for the CDCP for $C_{t}$,
which was introduced by \cite{jang2021review}. The infinitesimal generator of the joint process is also provided extending it to the time-inhomogeneous case, together with required moments.
We describe our insurance market and discuss the condition of the absence of arbitrage in \cref{Sec_Pricing}.
\Cref{Sec_Numerical} provides arbitrage-free catastrophe reinsurance premiums numerically obtained using the Monte Carlo simulation method. Comparisons are also made with the generalised Hawkes case and the Cox case, respectively. Sensitivity analyses are performed by changing the retention level, the Esscher parameters and the intensity parameters.
\Cref{Sec_Conclusion} concludes the paper.
Appendix~\ref{EMM Non-zero} defines an equivalent martingale measure under a non-zero interest rate, while Appendix~\ref{simulation} details our Monte Carlo algorithms.

\section{Dynamics for catastrophe losses}\label{Sec_catastrophe_losses}

In this section, the dynamics for catastrophe losses from the insurer/reinsurer's perspective is described. Specifically, a dynamic contagion process (DCP) with a stochastic intensity process is adopted to model the number of catastrophes. Indeed, from the reinsurer's perspective, the catastrophic component of the liability is given by the aggregate catastrophic losses to be covered by a reinsurance contract. The aggregate catastrophic losses are described by a compound
dynamic contagion process (CDCP).

\subsection{Dynamic contagion process}
A definition for the {\it standard} dynamic contagion process (DCP) with {\it time-homogeneous} parameters will be provided by \cref{DCPDefinition}, while a definition for the associated CDCP will be provided by \cref{CDCPDefinition}. We will extend it to a version with {\it time-inhomogeneous} parameters later. 

We start with a complete probability space $(\Omega, {\cal F},\mathbb{P})$, where
$\mathbb{P}$ is a real-world probability measure, which is called a physical probability measure.
Suppose that the probability space $(\Omega, {\cal F},\mathbb{P})$ is sufficiently rich to allow for the definition of all the random processes or quantities that will be introduced subsequently.
A continuous-time model with a finite horizon ${\cal T} := [0, T]$ is considered,
where $T < \infty$. Let $\{ N_t \}_{t \in {\cal T}}$ denote a point process
defined on $(\Omega, {\cal F},\mathbb{P})$. Let $\{ T_{2, j} \}_{j = 1, 2, \ldots}$
be a sequence of random jump times of the point process $\{ N_t \}_{t \in {\cal T}}$.
That is, $\{ T_{2, j} \}_{j = 1, 2, \ldots}$ is a sequence of positive-valued random
variables on $(\Omega, {\cal F},\mathbb{P})$ such that $0 < T_{2, 1} < T_{2, 2} < \cdots$, $\mathbb{P}$-a.s. Then, for all $t \in {\cal T}$,
\begin{equation*} % \label{Nt}
N_t = \sum_{j \ge 1} \mathbb{I}_{\{ T_{2, j} \le t \} }, 
\end{equation*}
where $\mathbb{I}_A$ is the indicator function of an event $A$.

Write $\{ Y_j \}_{j = 1, 2, \ldots}$ for the random positive jump sizes corresponding
to the random jump times $\{ T_{2, j} \}_{j = 1, 2, \ldots}$. Specifically,
for each $j = 1, 2, \ldots$, $Y_j$ is the random positive jump size corresponding
to the random jump time $T_{2, j}$. Assume that $\{ Y_j \}_{j = 1, 2, \ldots}$
is a sequence of independent and identically distributed (i.i.d.) positive
random variables on $(\Omega, {\cal F}, \mathbb{P})$ with the common distribution $G (y)$, where $y > 0$.

Let $\{ M_t \}_{t \in {\cal T}}$ be a Poisson point process on $(\Omega, {\cal F},\mathbb{P})$ such that its intensity process is given by 
$\rho> 0$, for all $t \in {\cal T}$, $\mathbb{P}$-a.s. Write $\{ T_{1, i} \}_{i = 1, 2, \ldots}$
and $\{ X_i \}_{i = 1, 2, \ldots}$ for the random jump times and sizes for the
Poisson point process $\{ M_t \}_{t \in {\cal T}}$, respectively. Again, suppose
that $\{ X_i \}_{i = 1, 2, \ldots}$ is a sequence of i.i.d. positive random
variables on $(\Omega, {\cal F}, \mathbb{P})$ with the common distribution 
$H(x)$, where $x > 0$. Assume that under the probability measure
$\mathbb{P}$, $\{ X_i \}_{i = 1, 2, \ldots}$, $\{ T_{1,i} \}_{i = 1, 2, \ldots}$ and 
$\{ Y_{j} \}_{j = 1, 2, \ldots}$ are independent with each other.

\bigskip

\begin{definition}[Dynamic contagion process] \label{DCPDefinition} Suppose the point process $\{ N_t \}_{t \in {\cal T}}$ has a stochastic intensity process $\{ \lambda_t \}_{t \in {\cal T}}$ under
the probability measure $\mathbb{P}$. Then $\{ N_t \}_{t \in {\cal T}}$ is said to be a dynamic contagion process (DCP) if $\{ \lambda_t \}_{t \in {\cal T}}$ is given by:
\begin{equation} \label{dcp-definition}
\lambda_{t} = a + \left( \lambda_{0}-a\right) \ee^{-\delta t}+\sum_{i \ge
1} X_{i} \ee^{-\delta \left( t - T_{1, i} \right)}\mathbb{I}_{\{ T_{1,i} \le
t \}} + \sum_{j \ge 1} Y_{j} \ee^{-\delta \left(t - T_{2,j} \right)}
\mathbb{I}_{ \{ T_{2, j} \le t \}},
\end{equation}
where
\begin{itemize}
\item the constant mean-reverting (baseline) intensity level $a \ge 0$;
\item the constant initial intensity at time $t = 0$, $\lambda_{0} \geq a$, $\mathbb{P}$-a.s.; 
\item the constant rate of exponential decay $\delta >0$.
\end{itemize}
\end{definition}

In \cref{DCPDefinition}, $\left\{ X_{i}\right\}_{i=1,2,\ldots }$ 
is interpreted as a sequence of positive externally-excited jumps,
while $\left\{ Y_{j}\right\}_{j=1,2,\ldots }$ is interpreted as
a sequence of positive self-excited jumps. If there are no externally excited jumps 
in \cref{dcp-definition} (i.e., $\rho =0$ for all $t \in {\cal T}$) $\{ N_{t} \}_{t \in {\cal T}}$
becomes a generalised Hawkes process. If $a=0$ and there are no
self-excited jumps in \cref{dcp-definition}, $\{ N_{t} \}_{t \in {\cal T}}$ 
becomes the Cox process with shot-noise Poisson intensity. 
It may be noted that these processes are within the general framework of affine processes.
See, for example, \cite{duffie2000transform,duffie2003affine} and \cite{glasserman2010moment}. Let $\mathbb{F}^{\lambda}$
and $\mathbb{F}^N$ denote the $\mathbb{P}$-augmentation of
the natural filtration generated by the stochastic intensity process
$\{ \lambda_t \}_{t \in {\cal T}}$ and the point process
$\{ N_t \}_{t \in {\cal T}}$, respectively. That is, $\mathbb{F}^{\lambda} 
:= \{ {\cal F}^{\lambda}_t \}_{t \in {\cal T}}$ and 
$\mathbb{F}^N := \{ {\cal F}^N_t \}_{t \in {\cal T}}$ so that
for all $t \in {\cal T}$, 
\begin{equation*} %\label{sigmafields}
\begin{aligned} 
{\cal F}^{\lambda}_t &:= \sigma \{ \lambda_u \,|\, u \in [0, t] \} \vee {\cal N}, \\
{\cal F}^N_t &:= \sigma \{ N_u \,|\, u \in [0, t] \} \vee {\cal N},
\end{aligned}
\end{equation*}
where ${\cal N}$ is a collection of the $\mathbb{P}$-null subsets of ${\cal F}$;
${\cal F}_1 \vee {\cal F}_2$ is the minimal $\sigma$-field containing
the two $\sigma$-fields ${\cal F}_1$ and ${\cal F}_2$.
Note that for all $t \in {\cal T}$, ${\cal F}^N_t \subset {\cal F}^{\lambda}_t$.

\subsection{Compound dynamic contagion process}
 This section provides a definition of the compound dynamic contagion process (CDCP) that will be used for the catastrophic component of the liability. We also provide the infinitesimal generator of this process and relevant moments, after which we deal with it separately when the parameters of this process are time-inhomogeneous.

\begin{definition}[Dynamic contagion claims] \label{CDCPDefinition} Suppose $\{ T_{2, j} \}_{j = 1, 2, \ldots}$
is a sequence of random jump times of the DCP $\{ N_t \}_{t \in {\cal T}}$
having the stochastic intensity process $\{ \lambda_t \}_{t \in {\cal T}}$
given by \cref{dcp-definition}. Let $\{ \Xi_{j} \}_{j = 1, 2, \ldots}$ 
be a sequence of i.i.d. positive individual loss amounts with the
common distribution function $J(\xi )$, where $\xi >0$ such that
for each $j = 1, 2, \ldots$, $\Xi_j$ is the individual loss amount
which occurs at random time $T_{2, j}$. Assume
that under the probability measure $\mathbb{P}$, 
$\left\{ X_{i}\right\}_{i=1,2,\ldots }$, $\left\{ T_{1,i}\right\}_{i=1,2,\ldots }$, $\left\{
Y_{j}\right\}_{j=1,2,\ldots }$ and $\left\{ \Xi_{j}\right\}_{j=1,2,\ldots
}$ are independent of each other. Then a compound point process
$\{ C_t \}_{t \in {\cal T}}$ is said to be a compound dynamic contagion process 
(CDCP) if
\begin{equation*}
C_{t}=\sum\limits_{j\geq 1}\Xi_{j}\mathbb{I}_{\{ T_{2,j} \le t \} }. 
\end{equation*}
\end{definition}
Let us denote the integrated intensity process 
\begin{equation*}
\Lambda_t ~:=~ \int^t_0 \lambda_u \dd u.
\end{equation*}

\subsubsection{Time-homogeneous compound dynamic contagion process}

For  the {\it standard} DCP with {\it time-homogeneous} parameters, the infinitesimal generator ${\cal A}$ of the joint process $\{ \left( \lambda_{t},N_{t},C_{t}, M_t, \Lambda_t, t\right) \}_{t \in {\cal T}}$ acting on a function $f(\lambda ,n,c, m, \Lambda,t)\in \mathcal{D}( \mathcal{A} )$ is given by:
\begin{align} \label{eq:infinitesimal_generator}
&\hspace{-0.5cm}\mathcal{A} f( \lambda, n,c, m, \Lambda, t) \notag \\
&= \frac{\partial f}{\partial t}+\delta \left( a-\lambda \right) \frac{\partial f}{\partial
\lambda }  +  \lambda \frac{\partial f}{\partial \Lambda}
 \notag \\
&\quad +\lambda \left[ \int_{0}^{\infty }\int_{0}^{\infty }f(
\lambda +y,n+1,c+\xi , m, \Lambda,t) \dd G(y)\dd J(\xi )-f( \lambda ,n,c, m, \Lambda,t) %
\right]  \notag \\
&\quad +\rho \left[ \int_{0}^{\infty }f( \lambda +x,n,c, m+1, \Lambda,t)
\dd H(x)-f( \lambda ,n,c, m, \Lambda,t) \right],
\end{align}
where $\mathcal{D}( \mathcal{A}) $ is the domain of the generator $\mathcal{A}$ such that $f(\lambda ,n,c,m,\Lambda,t)$ is differentiable
with respect to $\lambda $ and $t$, and
\begin{equation*}
\left\vert \int_{0}^{\infty }\int_{0}^{\infty }f(
\lambda +y,n+1,c+\xi ,m,\Lambda,t) \dd G(y)\dd J(\xi )-f( \lambda ,n,c,m,\Lambda,t)
\right\vert <\infty \text{,}
\end{equation*}
\begin{equation*}
\left\vert
 \int_{0}^{\infty }f( \lambda +x,n,c,m+1,\Lambda,t)
\dd H(x)-f( \lambda ,n,c,m,\Lambda,t)
\right\vert <\infty \text{.}
\end{equation*}
For details on the definition and derivations of the infinitesimal generator in \eqref{eq:infinitesimal_generator}, please refer to
\cite{dassios2011dynamic,dassios2017generalized}, \cite{jang2021review}
and \cite{oksendal2013stochastic}.

Based on \cref{eq:infinitesimal_generator}, we can easily obtain the moments of $\lambda_t$, $N_t$ and $C_{t}$. To do so, we denote the first-order moments of $X$, $Y$ and $\Xi $, respectively, by
\begin{equation*}
\mu_{H}=\int_{0}^{\infty }x \dd H(x),\ \ 
\mu_{G}=\int_{0}^{\infty }y \dd G(y),\ \ 
\mu_{J}=\int_{0}^{\infty }\xi \dd J(\xi) .
\end{equation*}%
and their Laplace transforms by
\begin{equation} \label{Laplacetransforms}
\hat{h}(s) = \int_{0}^{\infty }\ee^{-s
x} \dd H(x), \quad
\hat{g}(s)
= \int_{0}^{\infty }\ee^{-s y} \dd G(y), \quad
\hat{j}(s) = \int_{0}^{\infty }\ee^{-s \xi
}\dd J(\xi ),
\end{equation}%
which are supposed to be finite.
The Laplace transforms will be used in \cref{Sec_Pricing}.
\\

The following proposition for a {\it standard} CDCP with {\it time-homogeneous} parameters is directly adapted from \cite{dassios2011dynamic,dassios2017generalized} and \cite{jang2021review}.

\begin{proposition}[Expectations of $\lambda_t$, $N_t$, and $C_t$ for time-homogeneous CDCP]\label{1D_E_C_CE_thm}
The  expectation of $\lambda_{t}$ conditional on $\lambda_0$ under the real-world probability measure $\mathbb{P}$ is given by
\begin{equation*} %\label{lambda-moment}
\mathbb{E}\left[ \lambda_{t} \mid \lambda_0 \right]
~=~
\left\{
\begin{array}{ll}
\frac{\rho \mu_H + a\delta}{\kappa}
+ \left( \lambda_0 - \frac{\rho \mu_H + a\delta}{\kappa} \right)
\ee^{-\kappa t},
&  \kappa \neq 0,\\
\lambda_0 + \left(\rho \mu_H+ a\delta\right)t,
& \kappa=0,
\end{array} \right.
\end{equation*}
where $\kappa := \delta - \mu_{G}$.
Under the stationary condition $\kappa >0$, the asymptotic first moment of  $\lambda_{t}$ under $\mathbb{P}$ is given by
\begin{equation*}
\mu_1\ :=\ 
\lim_{t\rightarrow \infty}\mathbb{E}\left[ \lambda_{t} \mid \lambda_0 \right]
\ =\ 
\frac{\rho \mu_H + a\delta}{\kappa}.
\end{equation*}
The expectation of $N_{t}$ conditional on $N_0=0$ and $\lambda_0$ under $\mathbb{P}$ is given by
\begin{equation*} %\label{1D_E_C_N_UE_cond_DCPD}
\mathbb{E}\left[ N_{t} \mid  \lambda_0 \right]
~=~
\left\{
\begin{array}{ll}
\mu_1 t + \left( \lambda_0 - \mu_1 \right) \frac{1}{ \kappa } \left( 1-\ee^{- \kappa t} \right),
&  \kappa \neq 0,\\
 \lambda_0 t + \frac{1}{2}\left(\rho \mu_H+ a\delta\right)t^2,
& \kappa= 0.
\end{array} \right.
\end{equation*}
The expectation of $C_{t}$ conditional on $C_0=0$ and $\lambda_0$ under $\mathbb{P}$ is given by
\begin{equation*} %\label{1D_E_C_C_UE_condCDCPD}
\mathbb{E}\left[ C_{t} \mid  \lambda_0 \right]
~=~
\left\{
\begin{array}{ll}
\mu_J [ \mu_1 t + \left( \lambda_0 - \mu_1 \right) \frac{1}{ \kappa } \left( 1-\ee^{- \kappa t} \right) ],
&  \kappa \neq 0,\\
 \mu_J [ \lambda_0 t + \frac{1}{2}\left(\rho \mu_H+ a\delta\right)t^2 ],
& \kappa= 0.
\end{array} \right.
\end{equation*}
\end{proposition}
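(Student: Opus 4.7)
The plan is to use the Dynkin-type identity coming from the infinitesimal generator $\mathcal{A}$ in \cref{eq:infinitesmal_generator} to derive a linear ODE for each expectation, then solve. Because each of the three quantities is in the domain of $\mathcal{A}$ (they are linear in one coordinate and bounded modulo the moment conditions on $H,G,J$), for any such test function $f$ the process $M^f_t := f(\lambda_t,N_t,C_t,M_t,\Lambda_t,t) - \int_0^t \mathcal{A}f(\lambda_u,N_u,C_u,M_u,\Lambda_u,u)\,\mathrm{d}u$ is a $\mathbb{P}$-martingale; taking conditional expectations given $\lambda_0$ yields $\frac{\mathrm{d}}{\mathrm{d}t}\mathbb{E}[f_t\mid\lambda_0]=\mathbb{E}[\mathcal{A}f_t\mid\lambda_0]$.

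First I would compute $\mathbb{E}[\lambda_t\mid\lambda_0]$. Taking $f(\lambda,n,c,m,\Lambda,t)=\lambda$ and using $\int_0^\infty y\,\mathrm{d}G(y)=\mu_G$, $\int_0^\infty x\,\mathrm{d}H(x)=\mu_H$, the generator gives $\mathcal{A}f = \delta(a-\lambda)+\lambda\mu_G+\rho\mu_H = (a\delta+\rho\mu_H)-\kappa\lambda$. Writing $m_1(t):=\mathbb{E}[\lambda_t\mid\lambda_0]$, we obtain the linear ODE $m_1'(t) = a\delta+\rho\mu_H-\kappa m_1(t)$ with $m_1(0)=\lambda_0$. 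When $\kappa\neq 0$ the standard integrating-factor solution gives $m_1(t)=\tfrac{\mu_H\rho+a\delta}{\kappa}+(\lambda_0-\tfrac{\mu_H\rho+a\delta}{\kappa})\mathrm{e}^{-\kappa t}$, and when $\kappa=0$ direct integration gives $m_1(t)=\lambda_0+(\mu_H\rho+a\delta)t$. The asymptotic statement $\mu_1=\lim_{t\to\infty}m_1(t)=(\mu_H\rho+a\delta)/\kappa$ follows immediately from the $\kappa>0$ case.

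Next I would compute $\mathbb{E}[N_t\mid\lambda_0]$. Taking $f=n$, only the jump term of the DCP contributes: $\mathcal{A}n = \lambda$. Hence $\frac{\mathrm{d}}{\mathrm{d}t}\mathbb{E}[N_t\mid\lambda_0]=\mathbb{E}[\lambda_t\mid\lambda_0]=m_1(t)$, and since $N_0=0$, $\mathbb{E}[N_t\mid\lambda_0]=\int_0^t m_1(u)\,\mathrm{d}u$. Substituting the two cases for $m_1$ and integrating elementarily recovers the stated formulas; in the $\kappa\neq 0$ case, writing the constant part as $\mu_1$ makes the answer $\mu_1 t+(\lambda_0-\mu_1)\kappa^{-1}(1-\mathrm{e}^{-\kappa t})$. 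Finally, taking $f=c$ yields $\mathcal{A}c=\lambda\mu_J$ (only the self-excited jump integral contributes, via $\int_0^\infty \xi\,\mathrm{d}J(\xi)=\mu_J$), so $\mathbb{E}[C_t\mid\lambda_0]=\mu_J\int_0^t m_1(u)\,\mathrm{d}u = \mu_J\,\mathbb{E}[N_t\mid\lambda_0]$, which is the stated formula.

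The only subtle points, rather than obstacles, are (i) justifying the Dynkin identity for unbounded $f$ on an infinite state space — handled by the assumed finiteness of $\mu_H,\mu_G,\mu_J$, which ensures the integrals in $\mathcal{D}(\mathcal{A})$ are finite and standard localisation (a sequence of stopping times $\tau_n\uparrow T$ with bounded-by-$n$ state) together with monotone/dominated convergence removes the localisation — and (ii) splitting into the two regimes $\kappa\neq 0$ and $\kappa=0$ when solving the ODE, which is purely routine. No probabilistic heavy lifting is needed beyond the martingale characterisation already built into the generator statement \cref{eq:infinitesmal_generator}.
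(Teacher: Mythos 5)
Your proof is correct, and it uses exactly the generator/Dynkin/ODE method that the paper itself employs to prove the more general time-inhomogeneous Proposition~\ref{thm:tdlamda_mean} (set $f=\lambda$, $f=n$, $f=c$ in the generator, use the martingale $f_t-\int_0^t\mathcal{A}f_s\,\mathrm{d}s$, differentiate, solve the linear ODE). The paper does not include a separate proof of Proposition~\ref{1D_E_C_CE_thm}, stating instead that it is adapted directly from \cite{dassios2011dynamic,dassios2017generalized} and \cite{jang2021review}, so your argument simply specializes the paper's Proposition~\ref{thm:tdlamda_mean} proof to constant parameters.
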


\subsubsection{Time-inhomogeneous compound dynamic contagion process}

The previously defined DCP $\lambda_t$ from \cref{dcp-definition} is the result of solving the stochastic differential equation (SDE) 
\begin{equation} \label{dcp-sde-definition}
\lambda_{t} = \lambda_0 + \int_{0}^{t} \delta \left( a - \lambda_s \right) \dd s + \sum_{i \ge 1} X_{i} \, \mathbb{I}_{\{ T_{1,i} \le t \}} + \sum_{j \ge 1} Y_{j} \, \mathbb{I}_{ \{ T_{2, j} \le t \}} \,.
\end{equation}
The parameters of this process are all time-homogeneous.
More generally, we can extend to the {\it time-inhomogeneous} situation by allowing the parameters to vary over time (except the decaying rate $\delta$). Specifically, we impose the following assumptions:
\begin{enumerate}
\item[(A1)] The mean-reverting level $a(t)$ and the rate of externally excited jump arrival $\rho(t)$ are deterministic functions of time $t$;
they are bounded on all intervals $[0, t)$ (no explosions); 
and $\lambda_0 \geq a(t)$.
\item[(A2)] The distribution function of the externally excited jump sizes at any time $t$ is $H(x; t)$ for $x > 0$ with $\mu_H(t) = \int_0^\infty x \, \dd H(x; t)$.
\item[(A3)] The distribution function of the self-excited jump sizes at any time $t$ is $G(y; t)$ for $y > 0$ with $\mu_G(t) = \int_0^\infty y \, \dd G(y; t)$.
\item[(A4)] $a(t)$, $\rho(t)$, $G(y; t)$ and $H(x; t)$ are
Riemann-integrable functions of $t$; they are all positive.
\end{enumerate}
After replacing the time-homogeneous constants in \cref{dcp-sde-definition} with time-varying functions, and solving the SDE, we arrive at the $\lambda_t$ for this new time-inhomogeneous DCP
\begin{equation} \label{idcp-definition}
\lambda_{t} = \lambda_0 \ee^{-\delta t} + \delta \int_{0}^{t} a(s) \ee^{-\delta (t-s)} \, \dd s + \sum_{i \ge
1} X_{i} \ee^{-\delta \left( t - T_{1, i} \right)}\mathbb{I}_{\{ T_{1,i} \le
t \}} + \sum_{j \ge 1} Y_{j} \ee^{-\delta \left(t - T_{2,j} \right)}
\mathbb{I}_{ \{ T_{2, j} \le t \}} \,.
\end{equation}

We allow the model parameters to be time-inhomogeneous or time-varying to reflect evolving exposure and claim dynamics in the insurance and reinsurance portfolios. Specifically, in reality, claim intensity and contagion effects may vary over time due to changes in insured exposure, portfolio composition, seasonal effects, and external risk conditions. Introducing time-inhomogeneity at this stage provides a flexible actuarial framework for modeling claim arrivals, while its role under an equivalent change of measures will become apparent later in \cref{Sec_Pricing}.

For this time-inhomogeneous CDCP, we can obtain the moments of $\lambda_t$, $N_t$ and $C_t$ as below, where the relevant parameters are time-dependent. 
They extend the results in \cref{1D_E_C_CE_thm}.

\begin{proposition}[Expectations of $\lambda_t$, $N_t$ and $C_t$ for time-inhomogeneous CDCP] \label{thm:tdlamda_mean}
For the time-inhomogeneous CDCP, the expectation of $\lambda_t$ conditional on $\lambda_0$ under the real-world probability measure $\mathbb{P}$ is given by
\begin{equation} \label{mean-lambda-td}
\mu_{\lambda}(t)
~:=~
\mathbb{E}\left[ \lambda_{t}\mid \lambda_{0}\right]
~=~ 
 \lambda_{0} \ee^{-\int_{0}^{t} \kappa(s) \dd s}
 + \ee^{-\int_{0}^{t} \kappa(s) \dd s} \int\limits_{0}^{t} \ee^{\int_{0}^{s} \kappa(u) \dd u}
 \big\{ \rho(s) \mu_H(s)+a(s) \delta \big\} \dd s,
\end{equation}
where 
\begin{equation*}
\kappa(t) := \delta - \mu_G(t).
\end{equation*}
The expectation of $N_{t}$ conditional on $N_0=0$ and $\lambda_0$ under $\mathbb{P}$ is given by
\begin{equation}\label{mean-N-td}
\mathbb{E}[ N_{t} \mid \lambda_{0} ] 
=
\int\limits_{0}^{t} \mu_{\lambda}(s) \dd s.
\end{equation}
The expectation of $C_{t}$ conditional on $C_0=0$ and $\lambda_0$ under $\mathbb{P}$ is given by
\begin{equation}\label{mean-C-td}
\mathbb{E}[ C_{t} \mid \lambda_{0} ] = \mu_{J} \int\limits_{0}^{t} \mu_{\lambda}(s) \dd s.
\end{equation}
\end{proposition}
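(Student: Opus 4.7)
The plan is to extend the infinitesimal generator in \cref{eq:infinitesmal_generator} to the time-inhomogeneous setting by replacing the constants $a$, $\rho$ and the distributions $G$, $H$ with their time-dependent versions $a(t)$, $\rho(t)$, $G(y;t)$, $H(x;t)$, and then derive linear first-order ODEs for the first moments by applying Dynkin's formula to the test functions $f(\lambda,n,c,m,\Lambda,t)=\lambda$, $n$ and $c$ respectively. Under the stated assumption that $a(\cdot)$, $\rho(\cdot)$, $\mu_H(\cdot)$ and $\mu_G(\cdot)$ are Riemann integrable and bounded on $[0,T]$, the compensator of $\lambda_t$ is well-behaved and the process $\lambda_t-\lambda_0-\int_0^t \mathcal{A}\lambda_s\,\dd s$ is a true martingale; this integrability check, although essentially routine, is the one place where care is required.

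For \cref{mean-lambda-td}, applying the time-inhomogeneous generator to $f=\lambda$ kills every term except
\begin{equation*}
\mathcal{A}\lambda \;=\; \delta\bigl(a(t)-\lambda\bigr)+\lambda\,\mu_G(t)+\rho(t)\,\mu_H(t)\;=\;-\kappa(t)\,\lambda+\delta\,a(t)+\rho(t)\,\mu_H(t),
\end{equation*}
so that $\mu_\lambda(t):=\mathbb{E}[\lambda_t\mid\lambda_0]$ solves the linear ODE
\begin{equation*}
\mu_\lambda'(t)=-\kappa(t)\,\mu_\lambda(t)+\delta\,a(t)+\rho(t)\,\mu_H(t),\qquad \mu_\lambda(0)=\lambda_0.
\end{equation*}
Multiplying by the integrating factor $\ee^{\int_0^t\kappa(s)\,\dd s}$ and integrating from $0$ to $t$ yields exactly the formula in \cref{mean-lambda-td}.

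For \cref{mean-N-td} and \cref{mean-C-td}, the same recipe gives $\mathcal{A}n=\lambda$ (only the $\dd G\,\dd J$ integral term contributes, with $n+1-n=1$) and $\mathcal{A}c=\mu_J\,\lambda$ (same term, integrated against $\xi\,\dd J(\xi)$). Taking expectations and applying Tonelli to exchange expectation with the $\dd s$ integral produces
\begin{equation*}
\mathbb{E}[N_t\mid\lambda_0]=\int_0^t \mu_\lambda(s)\,\dd s,\qquad \mathbb{E}[C_t\mid\lambda_0]=\mu_J\int_0^t \mu_\lambda(s)\,\dd s,
\end{equation*}
with $\mu_\lambda$ given by the expression already derived. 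The sanity check that reducing to time-homogeneous parameters recovers \cref{1D_E_C_CE_thm} is immediate: when $\kappa(s)\equiv\kappa$ is constant the integrating factor collapses to $\ee^{\kappa t}$ and the inner integral evaluates to $(\mu_H\rho+a\delta)(\ee^{\kappa t}-1)/\kappa$ (or to $(\mu_H\rho+a\delta)t$ if $\kappa=0$), reproducing the earlier expressions. The only genuine obstacle is verifying the martingale property rigorously; once that is granted, the rest is a short ODE computation.
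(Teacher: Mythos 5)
Your proposal is correct and follows essentially the same route as the paper: apply the time-inhomogeneous generator to the test functions $\lambda$, $n$, $c$ to obtain $\mathcal{A}\lambda = -\kappa(t)\lambda + \delta a(t) + \rho(t)\mu_H(t)$, $\mathcal{A}n = \lambda$, $\mathcal{A}c = \mu_J\lambda$, invoke the martingale/Dynkin property to get the linear ODE for $\mu_\lambda(t)$, and solve with an integrating factor, after which \cref{mean-N-td,mean-C-td} follow by integrating. The paper likewise takes the martingale property as granted and states the $N_t$, $C_t$ expectations without displaying the $\mathcal{A}n$, $\mathcal{A}c$ computations explicitly, which you spell out; your remark on verifying the martingale property and the time-homogeneous sanity check are reasonable additions but not material departures.
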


\begin{proof}
The infinitesimal generator ${\cal A}$ of the joint process $\{ \left( \lambda_{t},N_{t},C_{t}, M_t, \Lambda_t, t\right) \}_{t \in {\cal T}}$ acting on a function
$
f(\lambda ,n,c,m,\Lambda,t)\in \mathcal{D}\left( \mathcal{A} \right)
$
for the time-inhomogeneous CDCP
is given by:
\begin{align} \label{eq:infinitesimal_generator_td}
&\hspace{-0.5cm}\mathcal{A} f( \lambda, n,c,m, \Lambda, t) \notag \\
&= 
\frac{\partial f}{\partial t}+\delta \big( a(t)-\lambda \big) \frac{\partial f}{\partial
\lambda } + \lambda \frac{\partial f}{\partial \Lambda}
 \notag \\
&\quad +\lambda \left[ \int_{0}^{\infty }\int_{0}^{\infty }f(
\lambda +y,n+1,c+\xi ,m,\Lambda,t) \dd G(y;t)\dd J(\xi )-f( \lambda ,n,c,m,\Lambda,t) %
\right]  \notag \\
&\quad +\rho(t) \left[ \int_{0}^{\infty }f( \lambda +x,n,c,m+1,\Lambda,t)
\dd H(x;t)-f( \lambda ,n,c,m,\Lambda,t) \right],
\end{align}
where $\mathcal{D}\left( \mathcal{A}\right) $ is the domain of the generator 
$\mathcal{A}$ such that $f(\lambda ,n,c,m,\Lambda,t)$ is differentiable
with respect to $\lambda $ and $t$, and
\begin{equation*}
\left\vert \int_{0}^{\infty }\int_{0}^{\infty }f(
\lambda +y,n+1,c+\xi ,m,\Lambda,t) \dd G(y;t)\dd J(\xi )-f( \lambda ,n,c,m,\Lambda,t)
\right\vert <\infty \text{,}
\end{equation*}
\begin{equation*}
\left\vert 
 \int_{0}^{\infty }f( \lambda +x,n,c,m+1,\Lambda,t)
\dd H(x;t)-f( \lambda ,n,c,m,\Lambda,t)
\right\vert <\infty \text{.}
\end{equation*}
Setting $f( \lambda,n , c, m, \Lambda, t) =\lambda $ in
\cref{eq:infinitesimal_generator_td}, we have%
\begin{equation*}
\mathcal{A}\lambda =- \kappa(t) \lambda +\rho(t) \mu_H(t)+a(t) \delta .
\end{equation*}
For each $t \ge 0$, ${\lambda}^{\dagger}_t := \lambda_{t}-\lambda_{0}-\int_{0}^{t}\mathcal{A}\lambda_{s}\dd s$.
Then for any $s < t$, we have:
\begin{eqnarray*}
\mathbb{E} [\lambda^{\dagger}_t - \lambda^{\dagger}_s  \,\mid\, {\cal F}_s ] 
= \mathbb{E} \left [\lambda_t - \lambda_s - \int_{s}^{t}\mathcal{A}\lambda_{u}\dd u 
  \,\mid\,  {\cal F}_s \right ], \quad \mbox{$\mathbb{P}$-a.s.} \end{eqnarray*}
By the Dynkin formula, since $t < \infty$,
\begin{eqnarray*}
\mathbb{E} [ \lambda_t \,|\, {\cal F}_s ] =
\lambda_s + \mathbb{E} \left [ \int_{s}^{t}\mathcal{A}\lambda_{u}\dd u \ | \ {\cal F}_s  \right ], \quad \mbox{$\mathbb{P}$-a.s.}
\end{eqnarray*}
Consequently, by Fubini's theorem,
\begin{eqnarray*}
\mathbb{E} [\lambda^{\dagger}_t - \lambda^{\dagger}_s \,|\, {\cal F}_s ]
= \mathbb{E} \left [\lambda_t - \lambda_s - \int_{s}^{t}\mathcal{A}\lambda_{u}\dd u
  \,\mid\,  {\cal F}_s \right ] = 0,  \quad \mbox{$\mathbb{P}$-a.s.}
\end{eqnarray*}
This implies that 
\begin{eqnarray*}
\mathbb{E} [\lambda^{\dagger}_t \,|\, {\cal F}_s ]
= \lambda^{\dagger}_s, \quad \mbox{$\mathbb{P}$-a.s.}
\end{eqnarray*}
Since $\{ \lambda^{\dagger}_t \}_{t \ge 0}$ is an $(\mathbb{F}, \mathbb{P})$-martingale, we must have:
\begin{equation*}
\mathbb{E} [ \lambda^{\dagger}_t \,|\, {\cal F}_0 ]
= \mathbb{E} [ \lambda^{\dagger}_t \,|\, \lambda_0 ] = \mathbb{E}\Biggl( \lambda_{t} - \lambda_0  -\int_{0}^{t}\mathcal{A}\lambda
_{s}\dd s \,\mid\, \lambda_{0} \Biggr) = 0,
\quad \mbox{$\mathbb{P}$-a.s.}
\end{equation*}%
Consequently,
\begin{eqnarray*}
\mathbb{E} [\lambda_t \,|\, \lambda_0] = \lambda_0
+ \int^{t}_{0} \mathbb{E} \left [  {\cal A} \lambda_u \,|\, \lambda_0 \right ] \dd u, \quad \mbox{$\mathbb{P}$-a.s.}
\end{eqnarray*}
Then,
\begin{equation*}
\mathbb{E}[ \lambda_{t}\mid \lambda_{0} ]
= \lambda_{0}-\int_{0}^{t} \kappa(s) \, 
\mathbb{E}[ \lambda_{s}\mid \lambda_{0} ] \dd s+\int_{0}^{t}\left\{
\rho(s) \mu_H(s)+a(s) \delta \right\} \dd s.
\end{equation*}
Differentiating it with respect to $t$, we then have the ODE for $\mu_{\lambda}(t)
:=\mathbb{E} \left( \lambda_{t}\mid \lambda_{0}\right)$,
\begin{equation*}
\frac{\dd \mu_{\lambda}(t)}{\dd t}
= - \kappa(t) \mu_{\lambda}(t)
+\rho(t) \mu_H(t)+a(t) \delta .
\end{equation*}
and solving this ODE, we obtain \cref{mean-lambda-td} as its solution.
Using \cref{mean-lambda-td}, given $N_0 = C_0 = 0$, we can derive
\begin{equation*} %\label{mean-n-td}
\mathbb{E}[ N_t \mid \lambda_0 ] = \int_{0}^{t} \mathbb{E}[ \lambda_s \mid \lambda_0 ] \dd s \,  
\end{equation*}
and
\begin{equation*} %\label{mean-c-td}
\mathbb{E}[ C_{t}\mid \lambda_{0} ] = \mu_J \int_0^t \mathbb{E}[ \lambda_s \mid \lambda_0 ] \dd s.
\end{equation*}
\end{proof}

The expressions for \cref{mean-N-td,mean-C-td} would be very long with various simple exponential functions. To save space, we just leave their concise expressions, but we will compare them with corresponding time-homogeneous cases in the context of arbitrage-free catastrophe reinsurance premiums in \cref{Sec_Numerical}.
Appendix~\ref{simulation} contains the algorithm we developed to simulate from this new process.

\section{Insurance market and no-arbitrage}\label{Sec_Pricing}

We consider a liquid reinsurance market in which an insurer can cede part of its aggregate risk to a reinsurer via a reinsurance arrangement at any time. Here the no-arbitrage valuation
approach is adopted to value a reinsurance contract. \cite{sondermann1991reinsurance} 
pioneered the no-arbitrage valuation approach to reinsurance contracts. \cite{dassios2003pricing} proposed the use of the Esscher transform to specify 
an equivalent martingale measure for valuing reinsurance contracts by the
no-arbitrage valuation principle. \cite{jang2004arbitrage} introduced the use of the Esscher transform to determine arbitrage-free premiums for extreme losses. It may also be noted that \citet{WuthrichBuhlmannFurrer2010} discussed linear pricing functionals and market consistent valuation for insurance products under the no-arbitrage principle.
The developments here follow those in, for example, \cite{sondermann1991reinsurance}, \cite{dassios2003pricing} and \cite{jang2004arbitrage}.

Suppose that the evolution of the aggregate risk process $\{ C_t \}_{t \in {\cal T}}$ of the insurer over time is modelled by a CDCP as defined in \cref{CDCPDefinition}. To 
simplify the notation and discussion, we consider a complete filtered 
probability space $(\Omega, {\cal F}, \mathbb{F}, \mathbb{P})$, where
$\mathbb{F}$ is a filtration with respect to which the processes
$\{ N_t \}_{t \in {\cal T}}$, $\{ \lambda_t \}_{t \in {\cal T}}$
and $\{ C_t \}_{t \in {\cal T}}$ are adapted. 
As in Definition 3.1 of \cite{dassios2003pricing}, a definition
of a reinsurance strategy starting at a given time $t \in {\cal T}$
is provided.

\begin{definition} \label{reinsurance}
For each $t \in {\cal T}$, a reinsurance strategy starting at time $t$ 
is an $\mathbb{F}$-predictable process $\{ \phi_u \mid u \in [t, T] \}$ on $(\Omega, {\cal F}, \mathbb{P})$ such that for all $u \in [t, T]$, $\phi_u \in [0, 1]$. Here, $\phi_u$ represents the proportion of the insurer’s liability transferred to the reinsurer at time $u$. Note that ``$\phi_u = 0$" corresponds to no reinsurance and that ``$\phi_u = 1$" corresponds to full reinsurance. Write ${\cal H}_t$ for the space of all reinsurance strategies starting at time $t$.
\end{definition}

Let $\{ P_t \}_{t \in {\cal T}}$ denote an $\mathbb{F}$-adapted
process on $(\Omega, {\cal F}, \mathbb{P})$ such that
for all $t \in {\cal T}$, $P_t$ represents the total amount
of insurance premiums received by the insurer up to and including
time $t$. Then assuming that interest rate is constant, the net surplus process $\{ R_t \}_{t \in {\cal T}}$
from the insurance business is given by: 
\begin{equation*} %\label{surplus}
R_{t}=P_{t} - C_{t}, \quad t \in {\cal T}.
\end{equation*}
The assumption of a constant interest rate simplifies the computation of the premium in \cref{ssec:pricing-stoploss}.

For a given time $t \in {\cal T}$, if the insurer selects 
a reinsurance strategy $\{ \phi_u \mid u \in [t, T] \} \in {\cal H}_t$
at time $t$, then the insurer's final gain at time $T$ is given by:
\begin{equation} \label{gain}
G_{t, T} (\phi) = \int_{t}^{T}\phi_{u} \dd R_{u},
\end{equation}
where the stochastic integral in the right-hand side of \cref{gain} is
interpreted as a Stieltjes integral in a pathwise sense. Here
it is supposed that the reinsurer receives the direct insurer's
premiums for its engagement. 

A strategy $\{\phi_u \mid u \in [t, T]\}$ that allows for profit with no possibility of loss is called an arbitrage strategy. Hence, following \cite{sondermann1991reinsurance}, we give the following definition of an arbitrage (reinsurance) strategy.

\begin{definition} [Arbitrage reinsurance strategy] \label{ArbRe} For each
$t \in {\cal T}$, a reinsurance strategy starting at time 
$t$, say $\{ \phi_u \mid u \in [t, T] \} \in {\cal H}_t$, is
said to be an arbitrage reinsurance strategy if the 
respective insurer's final gain at time $T$ satisfies
\begin{equation*} %\label{arbitragegain}
    G_{t, T} (\phi) \geq 0, \ \mathbb{P}\text{-a.s.} \quad \text{and} \quad \mathbb{E} [ G_{t, T} (\phi ) \,|\, {\cal F}_t ] > 0,
\end{equation*}
where $\mathbb{E} [\cdot \,|\, {\cal F}_t ]$ is the conditional expectation
given ${\cal F}_t$ under the probability measure $\mathbb{P}$.
\end{definition}

As noted, for example, in \cite{dassios2003pricing}, 
using the fundamental theorem of asset pricing (\cite{HK1979} and \cite{HP1981}),
the insurance business characterised by the surplus process
$\{ R_t \}_{t \in {\cal T}}$ admits no arbitrage opportunities if there exists an equivalent
martingale measure $\tilde{\mathbb{P}}$ such that the process $\{ R_{t} \}_{t \in {\cal T}}$ is an $(\mathbb{F}, \tilde{\mathbb{P}})$-martingale. 

In fact, the insurance market is incomplete. Consequently,
there is more than one equivalent martingale measure.
As in, for example, \cite{dassios2003pricing} and \cite{jang2004arbitrage}, the Esscher transform will be employed in \cref{Sec_Numerical} to change probability measures. Consequently, 
an equivalent martingale measure can be selected.
The Esscher transform provides a convenient way to
determine an equivalent martingale measure.
The seminal work by \cite{GerberShiu1994} introduced the use of the Esscher transform to option pricing. 
\cite{BDES1996} and \cite{KS2002} generalised the use of the Esscher transform to select an equivalent martingale measure in a general semimartingale market. Since then, the use of the Esscher transform in option valuation has been widely studied. See, for example, \cite{GerberShiu1994}, \cite{gerber1996actuarial}, \cite{dassios2003pricing}, \cite{jang2004arbitrage}, 
\cite{STY2004}, \cite{ECS2005}, \cite{Siu2005}, \cite{GL2008}, \cite{BES2009}, \cite{EPS2009}, amongst many others. In the following, we recall the definition from \cite{dassios2003pricing}, %and \cite{jang2004arbitrage}
where the Esscher transform was used to select an equivalent martingale measure relevant to arbitrage-free reinsurance strategies in an (incomplete) insurance market. The %two 
definition is standard and is presented here for the sake of completeness.

\subsection{Equivalent martingale measures}

In this section, we extend the measure change for the Cox process with shot-noise intensity \citep{dassios2003pricing} to that for a standard dynamic contagion process \citep{dassios2011dynamic}. For further details, we refer to \cite{sondermann1991reinsurance}, where the no-arbitrage approach to pricing reinsurance contracts was first introduced.

Before introducing the formal notion of an equivalent martingale measure, we emphasize that the arbitrage-free pricing method considered in this paper is understood in the actuarial sense, rather than through replication in a liquid financial market. Specifically, we rule out reinsurance strategies that generate non-negative terminal surplus almost surely with strictly positive expected gain, in accordance with \cref{ArbRe}, and adopt the Esscher transform as a
risk-adjusted pricing principle reflecting risk aversion and solvency considerations. In our modeling setup, the market is incomplete and contingent claims such as catastrophe reinsurance
contracts cannot be replicated via dynamic hedging; valuation and hedging therefore “divorce”, as noted in \cite{madan2022nonlinear} (Chapter~13, Page 171 therein). This makes financial arbitrage arguments based on dynamic replication, such as those in \cite{ArtznerEiseleSchmidt2024}, less relevant in our
setting. Our approach is particularly appropriate for catastrophe reinsurance contracts, which are written over finite (typically annual or multi-year) horizons and are not dynamically
hedgeable, and operates within an actuarial martingale valuation paradigm under which the insurer’s surplus process is a martingale under the Esscher-transformed measure (\cref{Emm}),
thereby ensuring actuarial no-arbitrage pricing for non-traded catastrophic risks.

\begin{definition}[Equivalent martingale measure] \label{Emm}
A probability measure $\tilde{\mathbb{P}}$ is said to be an equivalent
martingale probability measure (i.e., $\tilde{\mathbb{P}}$ is equivalent to $\mathbb{P}$ on ${\cal F}_T$) if it satisfies the following three properties:

\begin{enumerate}[label=(\roman*)]
    \item $\tilde{\mathbb{P}}(A)=0$, iff $\mathbb{P}(A)=0$ for any $A \in {\cal F}_T$;
    \item The Radon--Nikodym derivative $\frac{\dd \tilde{\mathbb{P}}}{\dd \mathbb{P}} \in L^{2}(\Omega, {\cal F}_{T}, \mathbb{P})$,
    where $L^{2}(\Omega, {\cal F}_{T},\mathbb{P})$ is the space of square-integrable, ${\cal F}_T$-measurable, random variables on $(\Omega, \mathbb{P})$.
    \item The surplus process $\{ R_t \}_{t \in {\cal T}}$ is an $(\mathbb{F}, \tilde{\mathbb{P}})$-martingale. In particular, 
        \begin{equation*}
        \tilde{\mathbb{E}}\left[ R_{t}\mid \tciFourier_{s}\right]
        = R_{s},\text{ \ } \tilde{\mathbb{P}}-\textit{a.s}.
        \end{equation*}
    for any $0\leq s\leq t\leq T$, where $\tilde{\mathbb{E}} [\cdot \,|\, {\cal F}_s ]$ denotes the conditional expectation given ${\cal F}_s$ with respect to $\tilde{\mathbb{P}}$.
\end{enumerate}

\end{definition}
To simplify the discussion, given that we assume a constant interest rate, we take the interest rate to be zero. The framework extends straightforwardly to a non-zero constant interest rate by working with discounted surplus processes, as detailed in Appendix A. However, the incorporation of a stochastic interest rate, while it is more realistic, would complicate the modeling framework. Consequently, to simplify the discussion, we do not consider the situation of a stochastic interest rate here.

\begin{remark} It may be noted that \cref{Emm} (iii) can be expressed as follows: assuming $R_{0}=0$, (i.e., the surplus at the initial time $0$ is zero), we have:
\[
\tilde{\mathbb{E}}\!\left(R_{t}\mid \mathcal{F}_{0}\right) 
= \tilde{\mathbb{E}}\!\left(P_{t}-C_{t}\mid \mathcal{F}_{0}\right) 
= \tilde{\mathbb{E}}\!\left(P_{t}\mid \mathcal{F}_{0}\right) 
- \tilde{\mathbb{E}}\!\left(C_{t}\mid \mathcal{F}_{0}\right) 
= 0.
\]
Consequently,
\[
\tilde{\mathbb{E}}\!\left(C_{t}\mid \mathcal{F}_{0}\right) 
= \tilde{\mathbb{E}}\!\left(P_{t}\mid \mathcal{F}_{0}\right) 
= \text{an arbitrage-free insurance premium.}
\]
\end{remark}

In the sequel, we shall adopt the Esscher transform to select an equivalent martingale measure. To do this, we start with 
\cref{thm:martingale} giving a (local)-martingale which will be considered when specifying the Radon--Nikodym derivative for changing probability measures via the Esscher transform.

\begin{theorem} \label{thm:martingale}
Suppose that $\rho > 0$ for all $t \in {\cal T}$.
For constants $\theta, \nu, \phi, \psi$, we have an $(\mathbb{F}, \mathbb{P})$-(local)-martingale
\begin{equation*}
\ee^{K(t)} \theta^{N_t} \ee^{ B(t) \lambda_t} ~ \ee^{-\nu C_t} \ee^{\phi \Lambda_t} \psi^{M_t}, \quad t \in {\cal T},
\end{equation*}
if $\{B(t),K(t)\}_{t \in {\cal T}}$ satisfy the non-linear ordinary differential equations (ODE):
\begin{align*}
 B^{\prime}(t)  - \delta B(t)
+ \theta ~ \hat{j}(\nu) ~ \hat{g}(-B(t))    ~ + \phi -1 &=  0,\\
K^{\prime}(t) +
 a \delta  B(t) +   \rho \left[\psi   \hat{h}(-B(t)) -  1 \right]
&=  0,
\end{align*}
where
${\hat h} (\cdot)$, ${\hat g} (\cdot)$ and ${\hat j} (\cdot)$ are the Laplace transforms defined by \cref{Laplacetransforms}.
\end{theorem}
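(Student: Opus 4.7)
The plan is to apply the infinitesimal generator $\mathcal{A}$ in \cref{eq:infinitesmal_generator} to a suitably chosen exponential/power test function and show that the two ODEs are precisely the conditions under which $\mathcal{A}f \equiv 0$, which (together with an integrability check) gives the martingale property.

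Concretely, I would set
\begin{equation*}
f(\lambda,n,c,m,\Lambda,t) \;=\; \ee^{K(t)}\,\theta^{n}\,\ee^{B(t)\lambda}\,\ee^{-\nu c}\,\ee^{\phi \Lambda}\,\psi^{m},
\end{equation*}
so that the candidate process is $Z_t = f(\lambda_t,N_t,C_t,M_t,\Lambda_t,t)$. Computing the partials gives $\partial_t f = (K'(t)+B'(t)\lambda)f$, $\partial_\lambda f = B(t)f$, and $\partial_\Lambda f = \phi f$. The integral terms factor nicely because $f$ is a product of exponentials in $c,\lambda$ and a power in $n$: the self-excited contribution yields $f\bigl[\theta\,\hat g(-B(t))\,\hat j(\nu)-1\bigr]$, while the externally-excited contribution yields $f\bigl[\psi\,\hat h(-B(t))-1\bigr]$. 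Substituting into \cref{eq:infinitesmal_generator} and collecting the coefficient of $\lambda$ and the constant term separately gives
\begin{equation*}
\mathcal{A}f = f\cdot\Bigl\{\lambda\bigl[B'(t)-\delta B(t)+\theta\,\hat j(\nu)\hat g(-B(t))+\phi-1\bigr] + \bigl[K'(t)+a\delta B(t)+\rho(\psi\hat h(-B(t))-1)\bigr]\Bigr\}.
\end{equation*}
Since this must vanish identically in $\lambda$, the bracketed expressions must separately be zero, which recovers precisely the two stated ODEs for $B(t)$ and $K(t)$.

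Having $\mathcal{A}f\equiv 0$, Dynkin's formula gives that $Z_t - Z_0 - \int_0^t \mathcal{A}f(\cdots)\dd s = Z_t - Z_0$ is a local martingale, hence $\{Z_t\}_{t\in\mathcal{T}}$ is a local $(\mathbb{F},\mathbb{P})$-martingale. To upgrade to a true martingale I would verify a uniform-integrability-type bound on compact time intervals: since $N_t, M_t, C_t, \Lambda_t$ are non-negative and $B(t), K(t)$ are continuous solutions to the ODEs on $[0,T]$, the Laplace transforms $\hat h(-B(t)), \hat g(-B(t)), \hat j(\nu)$ being finite (as assumed after \cref{Laplacetransforms}) together with the affine structure of the DCP controls exponential moments. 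A standard localisation-plus-dominated-convergence argument, or equivalently invoking the fact that affine jump-diffusion exponential-affine transforms are true martingales on $[0,T]$ (see \cite{duffie2000transform,duffie2003affine}), closes the argument.

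The main obstacle is not the generator computation, which is essentially mechanical once the test function is guessed correctly, but the bookkeeping of the two parameter sets $(\theta,\nu)$ (coupled to the self-excited jumps of $N_t,C_t$) and $(\psi,\phi)$ (coupled to $M_t,\Lambda_t$), together with justifying the integrability needed to promote the local martingale to a genuine martingale. The latter relies on the finiteness of $\hat h(-B(\cdot))$ and $\hat g(-B(\cdot))$ on $[0,T]$, which in turn requires that the ODE for $B$ does not blow up before $T$; one would either impose this as a standing assumption on $(\theta,\psi,\nu,\phi)$ or verify it directly given the bounded-parameter setting used later in \cref{Sec_Pricing}.
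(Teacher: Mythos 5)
Your proposal follows essentially the same route as the paper: plug the exponential-affine test function into the generator in \cref{eq:infinitesmal_generator}, factor the double integral as $\hat g(-B(t))\hat j(\nu)$ and the single integral as $\hat h(-B(t))$, collect coefficients of $\lambda$ and the constant term, and set each to zero to recover the two ODEs. The one thing you do that the paper does not is explicitly flag the local-to-true-martingale gap; the paper goes directly from $\mathcal{A}f\equiv 0$ to ``martingale'' without discussing integrability, whereas you correctly note that one needs finiteness of $\hat h(-B(\cdot))$ and $\hat g(-B(\cdot))$ on $[0,T]$ (equivalently, no blow-up of $B$ before $T$) to justify that the exponential local martingale is a genuine one, which is a worthwhile tightening of the argument.
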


\begin{proof}
Firstly, we consider a function $f( \lambda_{t},N_{t},C_{t}, M_t, \Lambda_t, t)$ 
defined by:
\begin{equation*} %\label{f}
f( \lambda_{t},N_{t},C_{t}, M_t, \Lambda_t, t)
~=~
\ee^{K(t)} \theta^{N_t}   \ee^{   B(t)  \lambda_t} ~ \ee^{-\nu C_t} \ee^{\phi \Lambda_t} \psi^{M_t},
\end{equation*}
for constants $\theta,\psi,\nu$.

Applying It\^o's differentiation rule to $f$ gives:
\begin{eqnarray*} 
f (\lambda_t, N_t, C_t, M_t, \Lambda_t, t) &=&  f (\lambda_0, N_0, C_0, M_0, \Lambda_0, 0)
+ \int^{t}_{0} \delta (a - \lambda_u) \frac{\partial f}{\partial \lambda} \dd u + \int^{t}_{0} \lambda_u \frac{\partial f}{\partial \Lambda} \dd u \nonumber\\ 
&& + \int^{t}_{0} \lambda_{u-} \bigg ( \int^{\infty}_{0} \int^{\infty}_{0} f (\lambda_{u-} + y, N_{u-} + 1, C_{u-} + \xi, M_{u-}, \Lambda_{u-}, u) \nonumber\\
&& \quad \times \dd G(y) \dd J (\xi) - f (\lambda_{u-}, N_{u-}, C_{u-}, M_{u-}, \Lambda_{u-}, u) \bigg ) \dd u \nonumber\\
&& + \int^{t}_{0} \rho_{u-} \bigg ( \int^{\infty}_{0} f (\lambda_{u-} + x, N_{u-}, C_{u-}, M_{u-} + 1, \Lambda_{u-}, u) \dd H (x) \nonumber\\
&& \quad - f(\lambda_{u-}, N_{u-}, C_{u-}, M_{u-}, \Lambda_{u-}, u) \bigg ) \dd u + {\cal M}_t,
\end{eqnarray*}
where $\{ {\cal M}_t \}_{t \ge 0}$ is an $(\mathbb{F}, \mathbb{P})$-(local)-martingale.

Using \cref{eq:infinitesimal_generator}, we have:
\begin{align*}
&\hspace{-0.5cm} f (\lambda_t, N_t, C_t, M_t, \Lambda_t, t) \nonumber\\
&= f (\lambda_0, N_0, C_0, M_0, \Lambda_0, 0)
+ \int^{t}_{0} {\cal A} f (\lambda_{u-}, N_{u-}, C_{u-}, M_{u-}, \Lambda_{u-}, u) \dd u 
+ {\cal M}_t.
\end{align*}
If ${\cal A} f( \lambda , n,c,m, \Lambda, t)=0$ then 
\[
f (\lambda_t, N_t, C_t, M_t, \Lambda_t, t) =  f (\lambda_0, N_0, C_0, M_0, \Lambda_0, 0) + {\cal M}_t,
\]
i.e., $\{ f (\lambda_t, N_t, C_t, M_t, \Lambda_t, t) \}_{t \in {\cal T}}$ is an $(\mathbb{F}, \mathbb{P})$-(local)-martingale.

\sloppy
Plug $f(\lambda_t, N_t, C_t, M_t, \Lambda_t, t)$ into the generator in \cref{eq:infinitesimal_generator} for the \emph{standard} DCP and set $\mathcal{A} f( \lambda , n,c,m, \Lambda, t)=0$, where
\begin{equation*}
f( \lambda, n,c,m, \Lambda, t)
=
\ee^{K(t)} \theta^{n}  \ee^{  B(t)  \lambda} ~ \ee^{-\nu c} \ee^{\phi \Lambda} \psi^{m},
\end{equation*}
i.e.,
\begin{align*}
0&=
K^{\prime}(t) + B^{\prime}(t) \lambda
+\delta \left( a- \lambda \right) B(t)
+ \lambda \phi
\\
&\quad + \lambda \left[ \theta \int_{0}^{\infty }\int_{0}^{\infty } \ee^{ B(t) y}  \ee^{-\nu \xi} \dd G(y)\dd J(\xi )-1
\right]
+
\rho \left[ \psi \int_{0}^{\infty } \ee^{  B(t) x}
\dd H(x)- 1 \right],
\\
0
&=
\lambda \left\{
 B^{\prime}(t)  - \delta B(t) + \phi
+ \theta \int_{0}^{\infty }\int_{0}^{\infty } \ee^{  B(t) y}  \ee^{-\nu \xi} \dd G(y)\dd J(\xi )-1
\right\} \\
&\quad
+ K^{\prime}(t) + \delta   a  B(t) +
\rho \left[ \psi \int_{0}^{\infty } \ee^{ B(t) x}
\dd H(x)- 1 \right] \\
0
&=
\lambda \bigg\{
 B^{\prime}(t)  - \delta B(t) + \phi
+ \theta \hat{j}(\nu) \hat{g}(-B(t)) -1
\bigg\}
+ K^{\prime}(t) + \delta   a  B(t) +
\rho \left[ \psi \hat{h}(-B(t))- 1 \right],
\end{align*}
where
\begin{equation*}
\hat{j}(\nu) := \int_{0}^{\infty }    \ee^{-\nu \xi}  \dd J(\xi ),
\qquad
\hat{g}(B(t)) := \int_{0}^{\infty }  \ee^{ -B(t) y}  \dd G(y),
\qquad
\hat{h}(B(t)) := \int_{0}^{\infty } \ee^{- B(t) x}
\dd H(x).
\end{equation*}
So, we have the ODEs for $B(t),K(t)$,
\begin{align*}
 B^{\prime}(t)  - \delta B(t)
+ \theta ~ \hat{j}(\nu) ~ \hat{g}(-B(t))    ~ + \phi -1 &=  0,\\
K^{\prime}(t) +
 a \delta  B(t) +   \rho \left[\psi   \hat{h}(-B(t)) -  1 \right]
&=  0.
\end{align*}
Hence, the theorem is proved.
\end{proof}

The following theorem provides the existence and uniqueness results for the solutions to the nonlinear ODEs in \cref{thm:martingale}.

\begin{theorem}\label{thm:ODEs}
Assume that the following conditions hold:
\begin{equation*} %\label{ConditionsODEs}
\theta,\psi\geq 1,
\quad
\nu<0 \quad (\hat{j}(\nu)>1),
\qquad
\phi  =  -(\theta ~  \hat{j}(\nu)  -1) <0,
\qquad
\delta  >\theta ~  \hat{j}(\nu)  \mu_G.
\end{equation*}
Then the following statements hold:
\begin{enumerate}
\item There exists a unique solution for $B(t)$ to the nonlinear ODE in \cref{thm:martingale} with the initial condition $B(0)=b> 0$  and the stationary condition $\delta > \mu_G$ such that
\begin{equation*} %\label{ODEBSolution}
B(t)\ =\ 
\mathcal{G}^{-1}(t), \qquad t \geq0,
\end{equation*}
where
\begin{equation*} %\label{eq:def_function_G}
\mathcal{G}(B) := \int^{B}_b  \frac{\dd u }{f_1( u )}, \qquad B\in \left(b, B^+ \right), 
\end{equation*}
\begin{equation*}
f_1 (B) := \delta B -\theta ~ \hat{j}(\nu) ~ \left( \hat{g}(-B ) -1 \right).
\end{equation*}
and $B^+>0$ is the smallest positive solution to $f_1 (B)=0$.

\item Then, there exists a unique solution  for $K(t)$ to the nonlinear ODE in \cref{thm:martingale} with the boundary condition $K(0) = 0$ such that
\begin{equation*} %\label{ODEKSolution}
K (t) = -a \delta \int_0^t  B(s)\dd s + \rho \int_0^t \left[ 1 - \psi \hat{h}(-B(s)) \right] \dd s.
\end{equation*}
\end{enumerate}
\end{theorem}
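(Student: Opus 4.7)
The strategy is to handle the two ODEs sequentially. The equation for $B(t)$ becomes autonomous and separable once the prescribed value of $\phi$ is substituted, and the equation for $K(t)$ is then a direct quadrature, so the whole argument reduces to a careful phase-line analysis of a single scalar function $f_1$. First I would substitute $\phi = -(\theta\,\hat{j}(\nu)-1)$ into the $B$-ODE from \cref{thm:martingale}. The constant term $\phi-1 = -\theta\,\hat{j}(\nu)$ combines with $\theta\,\hat{j}(\nu)\,\hat{g}(-B)$ to give
\[
B'(t) \;=\; \delta\, B(t) \;-\; \theta\,\hat{j}(\nu)\bigl(\hat{g}(-B(t)) - 1\bigr) \;=\; f_1\bigl(B(t)\bigr),
\]
exactly matching the definition of $f_1$ in the statement. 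Formal separation of variables then gives $\mathcal{G}(B(t)) = t$; the task is to show $\mathcal{G}$ is a strictly increasing bijection of $(b,B^+)$ onto $[0,\infty)$, so that the inversion $B(t) = \mathcal{G}^{-1}(t)$ is well defined.

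To justify this, I would study $f_1$ on the convergence interval of $B\mapsto\hat{g}(-B)$. Since $\hat{g}(0)=1$ one has $f_1(0)=0$, and
\[
f_1'(B) \;=\; \delta \;-\; \theta\,\hat{j}(\nu)\int_0^\infty y\,\ee^{By}\,\dd G(y),
\]
so $f_1'(0) = \delta - \theta\,\hat{j}(\nu)\mu_G > 0$ by the stationarity-type hypothesis, while $f_1'$ is strictly decreasing in $B$ because the integrand is monotone in $B$. Hence $f_1$ is strictly concave on its domain and strictly positive on a maximal open interval $(0,B^+)$, where $B^+\in(0,\infty]$ is its smallest positive root. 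Choosing $b\in(0,B^+)$ guarantees $f_1(b)>0$, and on $(b,B^+)$ the map $f_1$ is $C^1$ and strictly positive, so Picard--Lindel\"of yields a unique local solution to $B'=f_1(B)$, $B(0)=b$. Separating variables gives $\mathcal{G}(B(t))=t$, and since $1/f_1 > 0$ on $(b,B^+)$, $\mathcal{G}$ is strictly increasing with continuous inverse, whence $B(t) = \mathcal{G}^{-1}(t)$.

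The main obstacle is extending this solution globally to $t\in[0,\infty)$, which is equivalent to proving $\mathcal{G}(B)\to+\infty$ as $B\uparrow B^+$. When $B^+<\infty$ is a simple zero of $f_1$, this is routine: $f_1(u)\sim f_1'(B^+)(u-B^+)$ near $B^+$ with $f_1'(B^+)\neq 0$ by strict concavity, giving a logarithmic divergence of $\mathcal{G}$, so $\mathcal{G}:(b,B^+)\to(0,\infty)$ is onto. The delicate case is $B^+=+\infty$, where one must argue that the trajectory cannot leave the domain of definition of $\hat{g}(-\cdot)$ in finite time; here one combines the condition $\delta > \theta\,\hat{j}(\nu)\mu_G$ with the convexity of $\hat{g}(-\cdot)$ to bound the growth of $1/f_1$ and conclude that the separated integral still diverges at the endpoint. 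This phase-line/blow-up step is where the bulk of the analytic work lies.

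For the second part, once $B(\cdot)$ has been constructed on $[0,\infty)$, the $K$-equation
\[
K'(t) \;=\; -a\delta\, B(t) \;+\; \rho\bigl[\,1 - \psi\,\hat{h}(-B(t))\,\bigr]
\]
has a continuous right-hand side, since $B$ is continuous and $\hat{h}(-\cdot)$ is smooth on the interval into which $B$ maps (its convergence radius contains $[0,B^+)$ by the hypotheses on $\hat{h}$). Integrating from $0$ to $t$ against the boundary condition $K(0)=0$ and invoking the fundamental theorem of calculus yields the displayed formula directly; uniqueness is immediate. This step is a pure quadrature and requires no further analysis beyond what has already been established for $B$.
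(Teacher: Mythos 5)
Your proposal is correct and follows essentially the same route as the paper: substitute $\phi=-(\theta\hat{j}(\nu)-1)$ to reduce the $B$-equation to the autonomous form $B'=f_1(B)$, establish $f_1(0)=0$, $f_1'(0)=\delta-\theta\hat{j}(\nu)\mu_G>0$ and strict concavity of $f_1$, deduce $f_1>0$ on $(0,B^+)$, separate variables to get $\mathcal{G}(B(t))=t$, invert, then obtain $K$ by quadrature. The one place you genuinely add substance is the argument that $\mathcal{G}(B)\to\infty$ as $B\uparrow B^+$ (logarithmic divergence at a simple zero), which the paper simply asserts when it declares $\mathcal{G}:(b,B^+)\to(0,\infty)$ an isomorphism; your simple-zero claim is in fact correct, since if $f_1'(B^+)=0$ then concavity would force $f_1\le 0$ everywhere, contradicting $f_1>0$ on $(0,B^+)$. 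One small remark: the case $B^+=+\infty$ that you flag as ``delicate'' cannot actually occur under the standing assumptions, because $\hat{g}(-B)=\int_0^\infty\ee^{By}\,\dd G(y)$ grows super-linearly wherever it is finite (and blows up at the edge of its convergence strip otherwise), so $f_1$ always has a finite positive zero; that portion of your argument, while not wrong, is unnecessary.
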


\begin{proof} 
Assume that
\begin{equation*}
\theta,\psi\geq 1,
\quad
\nu<0 \quad (\hat{j}(\nu)>1),
\qquad
\phi  =  -(\theta ~  \hat{j}(\nu)  -1) <0,
\qquad
\delta  >\theta ~  \hat{j}(\nu)  \mu_G.
\end{equation*}
Then the nonlinear ODE in \cref{thm:martingale} can be written as:
\begin{equation} \label{NonlinearODE}
\frac{\dd B(t)}{\dd t} =
\delta B - \theta ~ \hat{j}(\nu) ~ \left( \hat{g}(-B ) -1 \right),
\end{equation}
with the initial condition $B(0)=b> 0$ and the stationary condition $\delta > \mu_G$.
Define
\begin{equation} \label{f1}
f_1(B) :=
\delta B -\theta ~ \hat{j}(\nu) ~ \left( \hat{g}(-B ) -1 \right).
\end{equation} 
Then
\begin{equation} \label{f_10}
f_1(0)= 0.
\end{equation}
Also,
\begin{equation} \label{d f_1}
\left. \frac{\dd f_1 (B)}{\dd B} \right|_{B=0} =\ \delta - \theta \hat{j} (\nu) \mu_G\ >\ 0.
\end{equation}
Taking the second-order derivative of $f_1 (B)$ with respect to $B$ gives:
\begin{equation*} %\label{d2 f_1}
\frac{\dd^2 f_1 (B) }{\dd B^2} = - \theta \hat{j} (\nu) \int^{\infty}_{0} y^2 \ee^{B y} \dd G(y) < 0.
\end{equation*}
Then $f_1 (B)$ is strictly concave. This, together with \cref{f_10,d f_1}, imply that
$f_1(B) >0$, for $B\in \left(0, B^+\right)$, where 
$B^+>0$ is the smallest positive solution to $f_1 (B)=0$. 

Taking $B = u$ and $t = \tau$, \cref{NonlinearODE,f1} lead to:
\begin{equation} \label{NODE1}
\frac{\dd u}{f_1 (u)} = \dd \tau, \quad u \in (0, B^+).
\end{equation}
Recall that $B (0) = b$ and $B (t) = B$. Then
integrating both sides of \cref{NODE1} gives:
\begin{equation} \label{NODE1Integral}
\int^{B}_{b} \frac{\dd u}{f_1 (u)} = \int^{t}_{0} \dd \tau = t, \quad B \in (0, B^+).
\end{equation}
Define the left-hand side of \cref{NODE1Integral} as:
\begin{equation} \label{Gfunction}
\mathcal{G}(B) := \int^{B}_b  \frac{\dd u}{f_1( u )}, \qquad B\in \left(b,   B^+ \right).
\end{equation}
Note that $\mathcal{G}(B)$ is a continuous and strictly increasing function for $B \in (b, B^+)$. Consequently, its inverse $\mathcal{G}^{-1}$ exists and is uniquely determined. The function $\mathcal{G} : (b, B^+) \to (0, \infty)$ defined in \cref{Gfunction} is an isomorphism. The inverse $\mathcal{G}^{-1} (t)$ then gives the solution $B(t) = B$ to the nonlinear ODE in \cref{thm:martingale} with the initial condition $B (0) = b > 0$ as follows:
\begin{equation*}
B(t)
~~=~~
\mathcal{G}^{-1}(t) ~~ \in ~~ (b, B^+), \qquad t > 0.
\end{equation*}
Once $B(t)$ is uniquely determined as above, $K(t)$ can be obtained from the second nonlinear ODE in \cref{thm:martingale} with the boundary condition $K(0) = 0$ simply as
\begin{equation*}
K (t) =  -a \delta  \int_0^t  B(s)\dd s  +   \rho \int_0^t   \left[ 1 - \psi   \hat{h}(-B(s))   \right] \dd s.
\end{equation*}
\end{proof}

\sloppy
Recall that $\{\ee^{K(t)} \theta^{N_t} \ee^{   B(t)  \lambda_t} ~ \ee^{-\nu C_t} \ee^{\phi \Lambda_t} \psi^{M_t} \}_{t \in {\cal T}}$ is a (local)-martingale. We suppose that some (suitable) square integrability conditions hold so that $\{\ee^{K(t)} \theta^{N_t} \ee^{   B(t)  \lambda_t} ~ \ee^{-\nu C_t} \ee^{\phi \Lambda_t} \psi^{M_t} \}_{t \in {\cal T}}$ is a square-integrable martingale. We now describe how the probability laws of $N_{t}$ and $C_t$ change after changing probability measures by the Esscher transform.  
Firstly, from \cref{thm:martingale}, 
$\{  \ee^{K(t)} \theta^{N_t}   \ee^{   B(t)  \lambda_t} ~ \ee^{-\nu C_t} \ee^{\phi \Lambda_t} \psi^{M_t} \}_{t \in {\cal T}}$ is an $(\mathbb{F}, \mathbb{P})$-martingale. 
Then, for all $t \in {\cal T}$,
\begin{equation*} %\label{MartingaleProperty}
\mathbb{E}_0 [\ee^{K(t)} \theta^{N_t}   \ee^{   B(t)  \lambda_t} ~ \ee^{-\nu C_t} \ee^{\phi \Lambda_t} \psi^{M_t} ] 
= 
\ee^{K(0)} \theta^{N_0}   \ee^{   B(0)  \lambda_0} ~ \ee^{-\nu C_0} \ee^{\phi \Lambda_0} \psi^{M_0}. 
\end{equation*}
Consider the following (normalised) $\mathbb{F}$-adapted process
$\{\zeta_t \}_{t \in {\cal T}}$:
\begin{equation} \label{DensityProcess}
\zeta_t 
:= 
\frac{
\ee^{K(t)} \theta^{N_t}   \ee^{   B(t)  \lambda_t} ~ \ee^{-\nu C_t} \ee^{\phi \Lambda_t} \psi^{M_t}
}{\mathbb{E}_0 [\ee^{K(t)} \theta^{N_t}   \ee^{   B(t)  \lambda_t} ~ \ee^{-\nu C_t} \ee^{\phi \Lambda_t} \psi^{M_t} ]} 
= 
\frac{\ee^{K(t)} \theta^{N_t}   \ee^{   B(t)  \lambda_t} ~ \ee^{-\nu C_t} \ee^{\phi \Lambda_t} \psi^{M_t}
}{\ee^{K(0)} \theta^{N_0}   \ee^{   B(0)  \lambda_0} ~ \ee^{-\nu C_0} \ee^{\phi \Lambda_0} \psi^{M_0}}.
\end{equation}
It is clear that by its definition in \cref{DensityProcess} and \cref{thm:martingale},
$\{ \zeta_t \}_{t \in {\cal T}}$ is an
$(\mathbb{F}, \mathbb{P})$-martingale. For each $t \in {\cal T}$, $\zeta_t > 0$, $\mathbb{P}$-a.s. 
Furthermore, $\mathbb{E}_0 [\zeta_t] = \zeta_0 = 1$. Consequently, a new probability
measure $\tilde{\mathbb{P}}$, which is equivalent to 
the original probability measure $\mathbb{P}$ on
${\cal F}_T$, can be defined by putting:
\begin{equation*} %\label{RNDerivative}
\frac{d \tilde{\mathbb{P}}}{d \mathbb{P}} \bigg |_{{\cal F}_T} := \zeta_T.
\end{equation*}
Given that $\{ \zeta_t \}_{t \in {\cal T}}$ is a square-integrable $(\mathbb{F}, \mathbb{P})$--martingale, then 
\begin{eqnarray*}
\mathbb{E} \left[ \bigg | \frac{d \tilde{\mathbb{P}}}{d \mathbb{P}} \bigg |^2 \right] = \mathbb{E} [ | \zeta_T |^2] < \infty.
\end{eqnarray*}
Consequently, \cref{Emm} (ii) is satisfied.

We refer to related works employing the exponential change of measure: \cite{PalmowskiRolski2002} for Markov processes, \cite{KS2002} via the cumulant process and Esscher transform, and \cite{PalmowskiPojerThonhauser2025} for ruin probabilities with Hawkes arrivals.

\subsection{Infinitesimal generator under the equivalent martingale measure}
Let $\tilde{\mathcal{A}}$ denote the infinitesimal generator 
of the joint process $\{ ( \lambda_{t},N_{t},C_{t}, M_t, \Lambda_t, t) \}_{t \in {\cal T}}$ under the new probability measure $\tilde{\mathbb{P}}$. The key to study how the probability laws of $N_{t}$ and $C_t$ change after changing probability measures by the Esscher transform is to obtain the infinitesimal generator $\tilde{\mathcal{A}}$. This is to be achieved in \cref{thm:new-generator} and \cref{thm:new-lambda} to be presented in the sequel.  To do so we start with a technical lemma.

\bigskip

\begin{lemma} \label{lem:Dynkin}
Assume that $\tilde{f}( \lambda, n,c,m, \Lambda, t) = \tilde{f}( \lambda, t)$ for all $n$, $c$, $m$ and $\Lambda$, and that $\ee^{-B(t)\lambda_t}$ is a martingale. 
Consider $B(t)>0$ for some $t >0$. Then
\begin{equation} \label{eq:lemma34}
\tilde{\mathcal{A}} \tilde{f} (\lambda, 0) 
= \frac{ \mathcal{A} \bigl(\tilde{f}(\lambda, 0)\, \ee^{{-}B(t)\lambda_t}\bigr) }
       { \ee^{-B(t)\lambda_t} }.
\end{equation}
\end{lemma}

\begin{proof}
The generator of the process $(\lambda_t, t)$ acting on a function $\tilde{f}(\lambda, t)$ 
with respect to the equivalent martingale probability measure is
\begin{equation} \label{eq:gen}
\tilde{\mathcal{A}} \tilde{f} (\lambda, 0) 
= \lim_{t \downarrow 0} 
\frac{\tilde{\mathbb{E}}\!\left[\tilde{f}(\lambda_t, t) \mid \lambda_0 = \lambda\right] - \tilde{f}(\lambda, 0)}{t}.
\end{equation}

We will use 
\[
\frac{\ee^{-B(t)\lambda_t}}{\mathbb{E}\!\left(\ee^{-B(t)\lambda_t}\right)}
\]
as the Radon–Nikodym derivative to define the equivalent martingale probability measure, 
where $\mathbb{E}\!\left(\ee^{-B(t)\lambda_t}\right) < \infty$. 
Hence, the expected value of $\tilde{f}(\lambda_t, t)$ given $\lambda$ with respect 
to the equivalent martingale probability measure is
\begin{equation} \label{eq:RN}
\tilde{\mathbb{E}}\!\left[\tilde{f}(\lambda_t, t) \mid \lambda_0 = \lambda\right] 
= \frac{\mathbb{E}\!\left[\tilde{f}(\lambda_t, t)\, \ee^{-B(t)\lambda_t} \mid \lambda_0 = \lambda\right]}
{\mathbb{E}\!\left(\ee^{-B(t)\lambda_t} \mid \lambda_0 = \lambda\right)}.
\end{equation}

Since the denominator in \eqref{eq:RN} is a martingale, it becomes
\begin{equation} \label{eq:mart}
\tilde{\mathbb{E}}\!\left[\tilde{f}(\lambda_t, t) \mid \lambda_0 = \lambda\right] 
= \frac{\tilde{f}(\lambda, 0)\, \ee^{-B(t)\lambda} + 
\int_0^t \mathbb{E}\!\left[ \mathcal{A} \tilde{f}(\lambda_s, s)\, \ee^{-B(t)\lambda_s} \mid \lambda_0 = \lambda \right] \dd s}
{\ee^{-B(t)\lambda}}.
\end{equation}

Substituting \eqref{eq:mart} into \eqref{eq:gen} gives
\begin{equation} \label{eq:final}
\tilde{\mathcal{A}} \tilde{f}(\lambda, 0) 
= \frac{1}{\ee^{-B(t)\lambda}} \lim_{t \downarrow 0} 
\frac{1}{t} \int_0^t \mathbb{E}\!\left[ \mathcal{A} \tilde{f}(\lambda_s, s)\, \ee^{-B(t)\lambda_s} \mid \lambda_0 = \lambda \right] \dd s.
\end{equation}

Therefore, by Dynkin’s formula \citep[see][]{oksendal2013stochastic}, 
\eqref{eq:lemma34} follows immediately.
\end{proof}

\bigskip

\begin{theorem} \label{thm:new-generator}
Suppose that the following conditions hold:
\begin{equation*} % \label{ConditionsODEs2}
\theta,\psi\geq 1,
\quad
\nu<0 \quad (\hat{j}(\nu)>1),
\qquad
\phi  =  -(\theta ~  \hat{j}(\nu)  -1) <0,
\qquad
\delta  >\theta ~  \hat{j}(\nu)  \mu_G.
\end{equation*}
Write, for all
$t \in {\cal T}$, 
\begin{align*} %\label{eq:new-distributions}
\dd \tilde{G}(y;t) &:= \frac{\ee^{B(t) y}}{\hat{g}(-B(t))}  \dd G(y), &
\dd \tilde{H}(x;t) &:= \frac{\ee^{  B(t)  x}}{\hat{h}(-B(t))} \dd H(x), \\
\dd \tilde{J}(\xi) &:= \frac{\ee^{-\nu \xi}}{\hat{j}(\nu)}  \dd J(\xi ), &
\tilde{\rho}(t) &:= \psi \hat{h}(-B(t))~ \rho.
\end{align*}
Let ${\cal D} ({\cal A})$ and ${\cal D} (\tilde{\mathcal{A}})$ denote the domains of the infinitesimal generators ${\cal A}$ and $\tilde{\mathcal{A}}$, respectively. Then, for any function $\tilde{f} \in {\cal D} ({\cal A}) \cap {\cal D} (\tilde{\mathcal{A}})$,
\begin{align}\label{eq:new-generator}
&\hspace{-0.5cm}\tilde{\mathcal{A}} \tilde{f}( \lambda ,  n,c,m, \Lambda, t) \notag \\
&=
\frac{\partial \tilde{f}}{\partial t}
+  \lambda \frac{\partial \tilde{f}}{\partial \Lambda}
+
\delta \left( a-\lambda \right)\frac{\partial \tilde{f}}{\partial\lambda }
 \notag \\
&\quad
+
\theta \hat{j}(\nu) \hat{g}(-B(t))~ \lambda \left[   \int_{0}^{\infty }\int_{0}^{\infty }
\tilde{f}(\lambda +y,n+1,c+\xi ,m,\Lambda,t)    \dd \tilde{G}(y;t) \dd \tilde{J}(\xi )
-  \tilde{f}
\right]  \notag \\
&\quad
+ \tilde{\rho}(t) \left[ \int_{0}^{\infty }\tilde{f}( \lambda +x,n,c,m+1,\Lambda,t)
\dd \tilde{H}(x;t) - \tilde{f}  \right].
\end{align}
\end{theorem}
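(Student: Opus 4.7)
The plan is to exploit the fact that the Radon--Nikodym density $\zeta_t$ from \cref{DensityProcess} is itself a deterministic function of the joint state,
\begin{equation*}
\zeta_t \;=\; \eta(\lambda_t, N_t, C_t, M_t, \Lambda_t, t),
\qquad
\eta(\lambda, n, c, m, \Lambda, t) \;\propto\; \ee^{K(t)}\theta^{n} \ee^{B(t)\lambda}\ee^{-\nu c}\ee^{\phi\Lambda}\psi^{m},
\end{equation*}
and that, by the choice of $B(t)$ and $K(t)$ in \cref{thm:martingale}, $\{\zeta_t\}$ is an $(\mathbb{F},\mathbb{P})$-martingale, i.e.\ $\mathcal{A}\eta \equiv 0$. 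The standard change-of-measure argument for jump Markov processes then reduces the claim to the algebraic identity $\mathcal{A}(\eta\,\tilde{f}) = \eta\,\tilde{\mathcal{A}}\tilde{f}$ on $\mathcal{D}(\mathcal{A})\cap\mathcal{D}(\tilde{\mathcal{A}})$, since $\tilde{f}(X_t) - \int_{0}^{t} \tilde{\mathcal{A}}\tilde{f}(X_s)\dd s$ is a $\mathbb{P}^{\ast}$-martingale iff its product with $\zeta$ is a $\mathbb{P}$-martingale, which matches the natural $\mathcal{A}$-drift condition exactly when $\mathcal{A}(\eta\tilde{f}) = \eta\tilde{\mathcal{A}}\tilde{f}$.

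Next, I would expand $\mathcal{A}(\eta\tilde{f})$ via \cref{eq:infinitesmal_generator}. The continuous operators $\partial_t$, $\delta(a-\lambda)\partial_\lambda$, and $\lambda\partial_\Lambda$ split through the product rule into $\eta\{\tilde{f}_t + \delta(a-\lambda)\tilde{f}_\lambda + \lambda\tilde{f}_\Lambda\}$ (the desired continuous part of $\tilde{\mathcal{A}}$) plus ``$\tilde{f}$ times derivatives of $\eta$'' correction terms. For each jump term, the decisive simplification is the multiplicative factorization
\begin{equation*}
\eta(\lambda+y, n+1, c+\xi, m, \Lambda, t) \;=\; \eta\cdot\theta\,\ee^{B(t)y}\ee^{-\nu\xi}, \qquad \eta(\lambda+x, n, c, m+1, \Lambda, t) \;=\; \eta\cdot\psi\,\ee^{B(t)x},
\end{equation*}
which pulls $\theta\,\hat{j}(\nu)\,\hat{g}(-B(t))$ and $\psi\,\hat{h}(-B(t))$ out of the respective integrals. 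Recognising the remaining integrands $\ee^{B(t)y}\dd G(y)/\hat{g}(-B(t))$, $\ee^{-\nu\xi}\dd J(\xi)/\hat{j}(\nu)$, and $\ee^{B(t)x}\dd H(x)/\hat{h}(-B(t))$ as the tilted laws $\dd\tilde{G}$, $\dd\tilde{J}$, $\dd\tilde{H}$ of the statement, and absorbing $\psi\hat{h}(-B(t))\rho$ into $\tilde{\rho}(t)$, the shifted-$\tilde{f}$ part of $\mathcal{A}(\eta\tilde{f})/\eta$ is already of the form that appears in \cref{eq:new-generator}.

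The remaining and main bookkeeping challenge is to reconcile the non-shifted $\tilde{f}$-terms. These originate from (i) the $\eta$-derivatives $\eta_t = \eta[K'(t)+B'(t)\lambda]$, $\eta_\lambda = B(t)\eta$, $\eta_\Lambda = \phi\eta$ produced by the continuous operators, and (ii) the ``$-\lambda\eta\tilde{f}$'' and ``$-\rho\eta\tilde{f}$'' offsets supplied by the jump parts of $\mathcal{A}$. Matching them against the ``$-\theta\hat{j}(\nu)\hat{g}(-B(t))\lambda\,\tilde{f} - \tilde{\rho}(t)\tilde{f}$'' on the target side $\eta\tilde{\mathcal{A}}\tilde{f}$, and separating the coefficient of $\lambda$ from the constant term, reduces the required identity to exactly the two ODEs for $B$ and $K$ asserted in \cref{thm:martingale}. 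Since those ODEs hold by hypothesis, the identity $\mathcal{A}(\eta\tilde{f}) = \eta\tilde{\mathcal{A}}\tilde{f}$ is automatic, and dividing through by $\eta>0$ yields \cref{eq:new-generator}.
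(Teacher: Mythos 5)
Your proposal is correct and takes essentially the same approach as the paper. The paper's proof likewise plugs $\eta\tilde{f}$ (with $\eta := \ee^{K(t)}\theta^{n}\ee^{B(t)\lambda}\ee^{-\nu c}\ee^{\phi\Lambda}\psi^{m}$) into $\mathcal{A}$, computes $\tilde{\mathcal{A}}\tilde{f} = \mathcal{A}(\eta\tilde{f})/\eta$ via the product rule, pulls $\theta\hat{j}(\nu)\hat{g}(-B(t))$ and $\psi\hat{h}(-B(t))$ out of the jump integrals to reveal $\dd\tilde{G}$, $\dd\tilde{J}$, $\dd\tilde{H}$, and then invokes the two ODEs for $B$ and $K$ to cancel the leftover $\tilde{f}$-coefficients; your only addition is to spell out the (standard, and correct) martingale justification for the identity $\tilde{\mathcal{A}}\tilde{f}=\mathcal{A}(\eta\tilde{f})/\eta$, which the paper takes for granted.
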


\begin{proof}
Set
\begin{equation*}
\ee^{K(t)} \theta^{n}  \ee^{  B(t)  \lambda}  ~ \ee^{-\nu c} \ee^{\phi \Lambda} \psi^{m} \tilde{f}( \lambda , n,c,m, \Lambda, t),
\end{equation*}
into the generator in \cref{eq:infinitesimal_generator} and from \cref{lem:Dynkin}, we have
\begin{align*}
&\hspace{-0.1cm}
\tilde{\mathcal{A}} \tilde{f}( \lambda ,  n,c,m, \Lambda, t) \notag \\
&=
\frac{
\mathcal{A} \bigg\{ \ee^{K(t)} \theta^{n}  \ee^{  B(t)  \lambda}  ~ \ee^{-\nu c} \ee^{\phi \Lambda} \psi^{m} \tilde{f}( \lambda , n,c,m, \Lambda, t) \bigg\}
}{\ee^{K(t)} \theta^{n}  \ee^{  B(t)  \lambda}  ~ \ee^{-\nu c} \ee^{\phi \Lambda} \psi^{m}} \\
&=
\frac{\partial \tilde{f}}{\partial t} + \bigg( K^{\prime}(t) + B^{\prime}(t) \lambda \bigg) \tilde{f}
+
\delta \left( a-\lambda \right) \left( \frac{\partial \tilde{f}}{\partial\lambda } + B(t)  \tilde{f} \right)
+
\lambda \left( \frac{\partial \tilde{f}}{\partial \Lambda} + \phi \tilde{f} \right)
 \notag \\
&\quad + \lambda \left[ \theta \hat{j}(\nu) \hat{g}(-B(t)) \int_{0}^{\infty }\int_{0}^{\infty }
\tilde{f}(\lambda +y,n+1,c+\xi ,m,\Lambda,t)      \frac{\ee^{  B(t)  y}}{\hat{g}(-B(t))}  \dd G(y)  \frac{\ee^{-\nu \xi}}{\hat{j}(\nu)}  \dd J(\xi )
-\tilde{f}
\right]  \notag \\
&\quad +\rho \left[ \psi \hat{h}(-B(t)) \int_{0}^{\infty }\tilde{f}( \lambda +x,n,c,m+1,\Lambda,t)
\frac{\ee^{  B(t)  x}}{\hat{h}(-B(t))} \dd H(x) - \tilde{f}  \right].
\end{align*}
Define
\begin{equation*}
\dd \tilde{G}(y;t):= \frac{\ee^{  B(t)  y}}{\hat{g}(-B(t))}  \dd G(y),
\qquad
\dd \tilde{H}(x;t) := \frac{\ee^{  B(t)  x}}{\hat{h}(-B(t))} \dd H(x),
\qquad
\dd \tilde{J}(\xi ) := \frac{\ee^{-\nu \xi}}{\hat{j}(\nu)}  \dd J(\xi ),
\end{equation*}
then,
\begin{align*}
&\hspace{-0.1cm}
\tilde{\mathcal{A}} \tilde{f}( \lambda ,  n,c,m, \Lambda, t) \notag \\
&=
\frac{\partial \tilde{f}}{\partial t} + \bigg( K^{\prime}(t) + B^{\prime}(t) \lambda \bigg) \tilde{f}
+
\delta \left( a-\lambda \right) \left( \frac{\partial \tilde{f}}{\partial\lambda } + B(t)  \tilde{f} \right)
+
\lambda \left( \frac{\partial \tilde{f}}{\partial \Lambda} + \phi \tilde{f} \right)
 \notag \\
&\quad +
\lambda \left[ \theta \hat{j}(\nu) \hat{g}(-B(t)) \int_{0}^{\infty }\int_{0}^{\infty }
\tilde{f}(\lambda +y,n+1,c+\xi ,m,\Lambda,t)    \dd \tilde{G}(y;t) \dd \tilde{J}(\xi )
-\tilde{f}
\right]  \notag \\
&\quad +\rho \left[ \psi \hat{h}(-B(t)) \int_{0}^{\infty }\tilde{f}( \lambda +x,n,c,m+1,\Lambda,t)
\dd \tilde{H}(x;t) - \tilde{f}  \right]
\\
&=
\bigg[    K^{\prime}(t) +    a \delta B(t)
+ \bigg( B^{\prime}(t)  - \delta   B(t)      +\phi \bigg) \lambda
\bigg]  \tilde{f}
+ \frac{\partial \tilde{f}}{\partial t}
+  \lambda \frac{\partial \tilde{f}}{\partial \Lambda}
+
\delta \left( a-\lambda \right)\frac{\partial \tilde{f}}{\partial\lambda }
 \notag \\
&\quad +
\lambda \left[ \theta \hat{j}(\nu) \hat{g}(-B(t)) \int_{0}^{\infty }\int_{0}^{\infty }
\tilde{f}(\lambda +y,n+1,c+\xi ,m,\Lambda,t)    \dd \tilde{G}(y;t) \dd \tilde{J}(\xi )
-\tilde{f}
\right]  \notag \\
&\quad +\rho \left[ \psi \hat{h}(-B(t)) \int_{0}^{\infty }\tilde{f}( \lambda +x,n,c,m+1,\Lambda,t)
\dd \tilde{H}(x;t) - \tilde{f}  \right] \\
&=
\frac{\partial \tilde{f}}{\partial t}
+  \lambda \frac{\partial \tilde{f}}{\partial \Lambda}
+
\delta \left( a-\lambda \right)\frac{\partial \tilde{f}}{\partial\lambda }
 \notag \\
&\quad +
\theta \hat{j}(\nu) \hat{g}(-B(t)) \lambda \left[   \int_{0}^{\infty }\int_{0}^{\infty }
\tilde{f}(\lambda +y,n+1,c+\xi ,m,\Lambda,t)    \dd \tilde{G}(y;t) \dd \tilde{J}(\xi )
-  \tilde{f}
\right]  \notag \\
&\quad +\psi \hat{h}(-B(t)) ~\rho \left[   \int_{0}^{\infty }\tilde{f}( \lambda +x,n,c,m+1,\Lambda,t)
\dd \tilde{H}(x;t) - \tilde{f} \right].
\end{align*}
Define
\begin{equation*}
\tilde{\rho}(t):=\psi \hat{h}(-B(t))~ \rho,
\end{equation*}
and consequently, \cref{eq:new-generator} is obtained.
\end{proof}

\bigskip

\Cref{thm:new-generator} yields the following:

\begin{enumerate}
\item[(i)] The intensity process $\lambda_t$ has changed to
$\tilde{\lambda}_t := \theta \hat{j}(\nu)\hat{g}(-B(t))\,\lambda_t$,
which is time-dependent;

\item[(ii)] The self-exciting jump size measure $\dd G(y) $ has
changed to $\dd \tilde{G}(y;t):= \frac{\ee^{  B(t)  y}}{\hat{g}(-B(t))}  \dd G(y)$, which now depends on time;

\item[(iii)] The rate of external-exciting jump arrival 
$\rho$ has changed
to $\tilde{\rho}(t):=\psi \hat{h}(-B(t))~ \rho$, which now depends on time;

\item[(iv)] The external-exciting jump size measure $\dd H(x)$ has changed to $\dd \tilde{H}(x;t) := \frac{\ee^{  B(t)  x}}{\hat{h}(-B(t))} \dd H(x)$, which now
depends on time;

\item[(v)] The claim/loss size measure $\dd J\left( \xi \right) $ has changed
to $\dd \tilde{J}(\xi ) := \frac{\ee^{-\nu \xi}}{\hat{j}(\nu)}  \dd J(\xi )$.
\end{enumerate}

The Esscher transform provides a transparent and actuarially consistent mechanism for risk-adjusted pricing of catastrophe reinsurance. Under the transformed probability measure based on the Esscher transformation, the arrival intensity $\lambda_t$, jump size measures $\tilde{G}(\cdot; t)$, $\tilde{H}(\cdot; t)$,
$\tilde{J}(\cdot)$, and external excitation factor $\tilde{\rho}(t)$ are adjusted to reflect aversion to extreme losses and catastrophic risk, jointly determining an arbitrage-free premium consistent with actuarial principles. In particular, (i) scaling of the intensity process represents a frequency loading; (ii) tilted self-exciting jump sizes place greater weights on contagion-driven shocks; (iii) time-dependent adjustment of external-exciting arrivals loads for systemic shocks; (iv) transformed external-exciting jump sizes increase the likelihood of large external shocks; and (v) the Esscher tilting on claim sizes acts as a severity-based loading. Together, these adjustments allow reinsurers to incorporate risk aversion into a granular and interpretable manner, distinguishing between frequency, contagion, external shocks, and severity risks, while preserving a consistent arbitrage-free valuation framework.

\bigskip
The result in \cref{thm:new-generator} can be rewritten
by defining 
\begin{equation*}
\tilde{\lambda}_t
:=
\theta \hat{j}(\nu) \hat{g}(-B(t)) \lambda_t,
\qquad
\tilde{\Lambda}_t
:=
\theta \hat{j}(\nu) \hat{g}(-B(t)) \Lambda_t,
\end{equation*} 
for any time $t \in {\cal T}$, which
is presented in \cref{thm:new-lambda}
below.

\begin{corollary} \label{thm:new-lambda} 
The generator in \cref{eq:new-generator} can be alternatively expressed by
\begin{align} \label{eq:new-lambda}
&\hspace{-0.5cm}
\tilde{\mathcal{A}} \tilde{f}( \tilde{\lambda},  n,c,m, \tilde{\Lambda}, t) \notag \\
&=
\frac{\partial \tilde{f}}{\partial t}
+
\tilde{\lambda}   \frac{\partial \tilde{f}}{\partial \tilde{\Lambda} }
+
\delta \bigg( \theta \hat{j}(\nu) \hat{g}(-B(t)) ~ a  - \tilde{\lambda} \bigg)\frac{\partial \tilde{f}}{\partial \tilde{\lambda} }
 \notag \\
&\quad +
\tilde{\lambda} \left[   \int_{0}^{\infty }\int_{0}^{\infty }
\tilde{f}(\tilde{\lambda} +u,n+1,c+\xi ,m,\tilde{\Lambda},t)    \dd \tilde{G}\left( \frac{u}{\theta \hat{j}(\nu) \hat{g}(-B(t))} ; t \right) \dd \tilde{J}(\xi )
-  \tilde{f}
\right]  \notag \\
&\quad + \tilde{\rho}(t) \left[   \int_{0}^{\infty }\tilde{f}( \tilde{\lambda} + v,n,c,m+1,\tilde{\Lambda},t)
\dd \tilde{H}\left( \frac{v}{\theta \hat{j}(\nu) \hat{g}(-B(t))} ; t\right) -   \tilde{f}  \right].
\end{align}
\end{corollary}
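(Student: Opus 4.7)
The plan is to derive \cref{eq:new-lambda} from \cref{eq:new-generator} by a direct, term-by-term coordinate change, substituting the rescaled state variables $\tilde\lambda := \theta\hat j(\nu)\hat g(-B(t))\,\lambda$ and $\tilde\Lambda := \theta\hat j(\nu)\hat g(-B(t))\,\Lambda$. Writing $\alpha(t) := \theta\hat j(\nu)\hat g(-B(t))$, so that $\tilde\lambda = \alpha(t)\lambda$ and $\tilde\Lambda = \alpha(t)\Lambda$, the key observation is that every appearance of the combination $\alpha(t)\lambda$ on the right-hand side of \cref{eq:new-generator} coincides exactly with $\tilde\lambda$; the remaining work is simply to push the coordinate change through the first-order derivatives and the two jump integrals.

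For the derivative terms I would apply the chain rule: $\partial_\lambda \tilde f = \alpha(t)\,\partial_{\tilde\lambda}\tilde f$ and $\partial_\Lambda \tilde f = \alpha(t)\,\partial_{\tilde\Lambda}\tilde f$. Consequently $\lambda\,\partial_\Lambda \tilde f = \tilde\lambda\,\partial_{\tilde\Lambda}\tilde f$ and $\delta(a-\lambda)\,\partial_\lambda \tilde f = \delta\bigl(\alpha(t)a - \tilde\lambda\bigr)\,\partial_{\tilde\lambda}\tilde f$, matching the second and third non-trivial terms of \cref{eq:new-lambda}.

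For the two jump integrals I would perform a linear change of dummy variable. In the self-excited term, setting $u := \alpha(t)\,y$ turns the prefactor $\alpha(t)\lambda$ into $\tilde\lambda$ and converts $\tilde f(\lambda+y,\ldots)$ into $\tilde f(\tilde\lambda+u,\ldots)$; the measure $\dd\tilde G(y;t)$ pushes forward to $\dd\tilde G(u/\alpha(t);t)$. The analogous change $v := \alpha(t)\,x$ handles the externally-excited integral, while the intensity $\tilde\rho(t) = \psi\hat h(-B(t))\rho$ carries over unchanged. Assembling all pieces yields the right-hand side of \cref{eq:new-lambda}.

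The only real subtlety, and what I would flag as the main obstacle to full rigour, is the interpretation of $\partial_t\tilde f$. A strict time-dependent coordinate change would produce additional contributions $\alpha'(t)\bigl(\lambda\,\partial_{\tilde\lambda}\tilde f + \Lambda\,\partial_{\tilde\Lambda}\tilde f\bigr)$, which do not appear in \cref{eq:new-lambda}. To match the stated form one must read $\partial_t\tilde f$ as the partial derivative taken with the original coordinates $(\lambda,\Lambda)$ held fixed — a natural convention since the corollary is presented as a notational rewriting of \cref{eq:new-generator} rather than as the generator of a genuinely new Markov process. Once this reading is adopted, no further analytic input is required and the verification is purely mechanical.
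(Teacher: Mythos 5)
Your proposal is correct and takes essentially the same approach as the paper: define the rescaled variables $\tilde\lambda,\tilde\Lambda$, push the chain rule through the first-order derivative terms, and change the dummy variables in the two jump integrals. The subtlety you flag about $\partial_t\tilde f$ is genuine and is glossed over in the paper, which likewise treats $\partial_t$ with $(\lambda,\Lambda)$ held fixed, reading the corollary as a purely formal re-expression of the generator in \cref{eq:new-generator} rather than as the generator of a new autonomous Markov process.
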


\begin{proof}
Define
\begin{equation*}
\tilde{\lambda}
:=
\theta \hat{j}(\nu) \hat{g}(-B(t)) \lambda,
\qquad
\tilde{\Lambda}
:=
\theta \hat{j}(\nu) \hat{g}(-B(t)) \Lambda,
\end{equation*}
and  we have
\begin{align*}
&\hspace{-0.5cm}\tilde{\mathcal{A}} \tilde{f}( \tilde{\lambda},  n,c,m, \tilde{\Lambda}, t) \notag \\
&=
\frac{\partial \tilde{f}}{\partial t}
+
\tilde{\lambda}   \frac{\partial \tilde{f}}{\partial \tilde{\Lambda} }
+
\delta \bigg( \theta \hat{j}(\nu) \hat{g}(-B(t)) ~ a  - \tilde{\lambda} \bigg)\frac{\partial \tilde{f}}{\partial \tilde{\lambda} }
 \notag \\
&\quad +
\tilde{\lambda} \left[   \int_{0}^{\infty }\int_{0}^{\infty }
\tilde{f}(\tilde{\lambda} +\theta \hat{j}(\nu) \hat{g}(-B(t)) y,n+1,c+\xi ,m,\tilde{\Lambda},t)    \dd \tilde{G}(y;t) \dd \tilde{J}(\xi )
-  \tilde{f}
\right]  \notag \\
&\quad + \tilde{\rho}(t) \left[   \int_{0}^{\infty }\tilde{f}( \tilde{\lambda} + \theta \hat{j}(\nu) \hat{g}(-B(t)) x,n,c,m+1,\tilde{\Lambda},t)
\dd \tilde{H}(x;t) -   \tilde{f}  \right].
\end{align*}
Changing variables $u=\theta \hat{j}(\nu) \hat{g}(-B(t)) y, v=\theta \hat{j}(\nu) \hat{g}(-B(t)) x$, we have \cref{eq:new-lambda}.
\end{proof}

\bigskip

\begin{corollary}\label{thm:new-lambda_2} 
More explicitly, for
\begin{equation*}
\dd \tilde{G}(y;t) = \tilde{g}(y;t) \dd y,
\qquad
\dd \tilde{H}(x;t) = \tilde{h}(x;t) \dd x,
\end{equation*}
we have the transforms
\begin{align*}
a \ &\rightarrow \ \theta \hat{j}(\nu) \hat{g}(-B(t)) ~ a; \\
\rho \ &\rightarrow \  \psi \hat{h}(-B(t))~ \rho ;\\
h(v) \ &\rightarrow \ \frac{\tilde{h}\left( \frac{v}{\theta \hat{j}(\nu) \hat{g}(-B(t))}; t \right)}{\theta \hat{j}(\nu) \hat{g}(-B(t))} ; \\
g(u) \ &\rightarrow \ \frac{ \tilde{g}\left( \frac{u}{\theta \hat{j}(\nu) \hat{g}(-B(t))}; t\right)  }{\theta \hat{j}(\nu) \hat{g}(-B(t))} ; \\
j(\xi) \ &\rightarrow \ \tilde{j}(\xi) ,
\end{align*}
where
\begin{equation*}
\tilde{g}(y;t) := \frac{\ee^{B(t) y}}{\hat{g}(-B(t))} g (y),
\qquad
\tilde{h}(x;t) := \frac{\ee^{B(t) x}}{\hat{h}(-B(t))} h(x),
\qquad
\tilde{j}(\xi) := \frac{\ee^{-\nu \xi}}{\hat{j}(\nu)} j (\xi ).
\end{equation*}
\end{corollary}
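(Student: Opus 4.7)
The plan is to read off the corollary directly from the generator form established in \cref{thm:new-lambda}, translating between measure and density notation. First, I would set $c(t) := \theta \hat{j}(\nu) \hat{g}(-B(t))$ for brevity and compare \cref{eq:new-lambda} term by term with the prototype generator \cref{eq:infinitesmal_generator_td} of a time-inhomogeneous CDCP. Matching the mean-reversion drift $\delta\bigl(c(t)\, a - \tilde{\lambda}\bigr)\,\partial \tilde{f}/\partial \tilde{\lambda}$ against the template $\delta\bigl(\tilde{a}(t) - \tilde{\lambda}\bigr)\,\partial \tilde{f}/\partial \tilde{\lambda}$ identifies the transformed mean-reverting level as $a \to c(t)\, a$, while matching the external-arrival prefactor yields $\rho \to \tilde{\rho}(t) = \psi \hat{h}(-B(t))\, \rho$, which was already defined in \cref{thm:new-generator}.

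For the jump-size densities, I would invoke a routine change of variables. The self-excitation integral in \cref{eq:new-lambda} is taken against $\dd \tilde{G}\!\left(u/c(t); t\right)$. Writing this measure in density form via $\dd \tilde{G}(y;t) = \tilde{g}(y;t)\,\dd y$ and substituting $y = u/c(t)$, $\dd y = \dd u/c(t)$ gives
\begin{equation*}
\dd \tilde{G}\!\left(\tfrac{u}{c(t)}; t\right) = \frac{\tilde{g}\!\left(u/c(t); t\right)}{c(t)}\, \dd u,
\end{equation*}
which is precisely the stated transform $g(u) \to \tilde{g}(u/c(t); t)/c(t)$. An identical substitution $v = c(t)\, x$ in the external-excitation integral produces the analogous transform for $h(v)$.

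Finally, because the individual claim size $\xi$ enters the generator only as an additive shift $c + \xi$ (without being rescaled by $c(t)$, unlike the intensity-jump variables $y$ and $x$), no Jacobian appears and the change of measure simply replaces $\dd J$ by $\dd \tilde{J}$, giving $j(\xi) \to \tilde{j}(\xi) = \ee^{-\nu \xi} j(\xi)/\hat{j}(\nu)$, again as recorded in \cref{thm:new-generator}. The main obstacle — and it is a minor one — is keeping careful track of the Jacobian factor $1/c(t)$ introduced by rescaling $y$ and $x$, and making sure the scaling $\tilde{\lambda} = c(t)\lambda$ from the corollary is consistently absorbed into the integration variables rather than double-counted in the drift. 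Once this bookkeeping is clean, every line of the corollary follows by inspection of \cref{eq:new-lambda}.
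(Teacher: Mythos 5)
Your proposal is correct and takes essentially the same route as the paper: read \cref{eq:new-lambda} term by term against the time-inhomogeneous CDCP generator template, then convert the Stieltjes measures $\dd \tilde{G}(\cdot/c(t);t)$ and $\dd \tilde{H}(\cdot/c(t);t)$ to densities with the chain-rule Jacobian $1/c(t)$, leaving $\dd\tilde{J}$ untouched because $\xi$ is not rescaled. You have actually spelled out the change-of-variables bookkeeping more carefully than the paper does, which simply states the result after substituting the density form.
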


\begin{proof}
Based on \cref{thm:new-lambda}, in particular for
\begin{equation*}
\dd \tilde{G}(y;t) = \tilde{g}(y;t) \dd y,
\qquad
\dd \tilde{H}(x;t) = \tilde{h}(x;t) \dd x,
\end{equation*}
we have
\begin{align*}
&\hspace{-0.5cm}\tilde{\mathcal{A}} \tilde{f}( \tilde{\lambda},  n,c,m, \tilde{\Lambda}, t) \notag \\
&=
\frac{\partial \tilde{f}}{\partial t}
+
\tilde{\lambda}   \frac{\partial \tilde{f}}{\partial \tilde{\Lambda} }
+
\delta \bigg( \theta \hat{j}(\nu) \hat{g}(-B(t)) ~ a  - \tilde{\lambda} \bigg)\frac{\partial \tilde{f}}{\partial \tilde{\lambda} }
 \notag \\
&\quad +
\tilde{\lambda} \left[ \int_{0}^{\infty }\int_{0}^{\infty }
\tilde{f}(\tilde{\lambda} +u,n+1,c+\xi ,m,\tilde{\Lambda},t)     \dd \tilde{J}(\xi ) ~ \frac{ \tilde{g}( \frac{u}{\theta \hat{j}(\nu) \hat{g}(-B(t))} ;t)  }{\theta \hat{j}(\nu) \hat{g}(-B(t))} \dd u
-  \tilde{f}
\right]  \notag \\
&\quad + \tilde{\rho}(t) \left[   \int_{0}^{\infty }\tilde{f}( \tilde{\lambda} + v,n,c,m+1,\tilde{\Lambda},t)
    \frac{\tilde{h}( \frac{v}{\theta \hat{j}(\nu) \hat{g}(-B(t))} ;t )}{\theta \hat{j}(\nu) \hat{g}(-B(t))}   \dd  v  -   \tilde{f}  \right].
\end{align*}
Then, we have the transforms.
\end{proof}

\bigskip

Note that the parameters of a CDCP under the original measure $\mathbb{P}$ are all constants, whereas the parameters of a CDCP under the new measure $\tilde{\mathbb{P}}$ become time-varying (except for the intensity-decay rate $\delta$), i.e., a {\it time-inhomogeneous} CDCP.
\\

In reality, a ``hard'' market in a (re)insurance industry occurs when the supply of coverage available is lower than its demand. This leads to a higher (re)insurance premium, reduced (or stopped) coverage, and less competition among (re)insurance providers. The Southern California wildfires of January 2025 further exacerbated this, as rising losses from catastrophic events, coupled with the increasing costs of reinsurance and construction, have prompted insurers like State Farm to limit or withdraw coverage in high-risk areas, intensifying the strain on the market.

There are several reasons why a hardening reinsurance market prevails. Firstly, there have been increasing frequency and severity of losses attributed to catastrophic events due to El Ni\~{n}o and La Ni\~{n}a. Secondly, a high inflation regime has been in force, which may be partly due to turbulence in macroeconomic conditions. Thirdly, there has been a growing number of emerging events such as cyber and pandemic events.

In a hardening reinsurance market, it may be reasonable for a reinsurer to calculate the premium of a catastrophe reinsurance contract in a more conservative and prudent way. To illustrate this situation in our current modelling set up, certain assumptions are imposed on $\theta,\psi, \nu, B(t)$ such that there would be more external-exciting jump arrivals in a given period of time, a higher value of self-exciting intensity, higher values of their jump sizes, as well as a larger size of claim or loss. To take on these additional risks in the hardening market, a reinsurer requires a higher compensation. Furthermore, it may be reasonable to postulate that the reinsurer aims to maximise their shareholders' wealth via earning profits. To this end, the reinsurer may charge a higher premium than the ``break-even'' premium. The latter is given by the expected claims evaluated under the original (real-world) probability measure. In the hardening market, $\theta,\psi, \nu, B(t)$ may be thought of as security loading factors to be evaluated by which positive gross premium should be finally charged. In fact, based on \cref{thm:new-lambda_2}, if the gross premium is always positive, we shall have the conditions
\begin{equation*}
\theta,\psi\geq 1,
\qquad
\nu<0,
\qquad
\mbox{$B(t)>0$ for some $t >0$}.
\end{equation*} 

From an economic perspective, the Esscher parameters $(\theta,\psi,\nu)$ provide a parsimonious
representation of reinsurer risk preferences under hard market conditions. In particular, the
transformed intensity process
\[
\tilde{\lambda}_t := \theta \hat{j}(\nu)\hat{g}(-B(t))\,\lambda_t
\]
introduces a time-dependent loading on claim arrivals, reflecting heightened sensitivity to both
claim frequency and loss severity during periods of elevated risk. The parameter $\theta$ acts as a frequency loading on the baseline arrival intensity, while the factor $\hat{j}(\nu)\hat{g}(-B(t))$ captures the interaction between severity aversion and contagion-driven
dynamics. Similarly, the parameter $\psi$ scales the arrival rate of externally driven shocks, representing increased concern about systemic events. The Esscher parameter $\nu<0$ induces a severity-based loading by tilting the claim size distribution toward larger losses, thereby increasing the contribution of catastrophic events to the premium.

In a hard reinsurance market, these choices of the parameters are consistent with observed outcomes such as higher premiums and tighter underwriting, resulting in a conservative valuation. On the other hand, reinsurers may also limit coverage in high-risk regions by transferring part of the risk whenever alternative capacity is available. In this sense, the Esscher parameters $(\theta,\psi,\nu)$ may serve as reduced-form measures incorporating market risk aversion and capital scarcity into catastrophe reinsurance pricing.

\subsection{Special Case: Exponential/Gamma Distributions for Jump Sizes}

Before closing this section, we show the explicit transformation for the
computation of arbitrage-free catastrophe stop-loss reinsurance premiums in \cref{Sec_Numerical}.

\begin{corollary}[Special Case: Exponential/Gamma Distributions for Jump Sizes]\label{corollary_case_exp_gamma}
In particular, we assume that external-exciting and
self-exciting jump sizes in the intensity process follow exponential distributions $H \sim \mathsf{Exp}(\alpha), G \sim \mathsf{Exp}(\beta)$, respectively, i.e., {%
\begin{equation*}
h(x) =\alpha \ee^{-\alpha x},
\qquad
g(y) =\beta \ee^{-\beta
y},\ 
\qquad\alpha,\beta >0,
\end{equation*}%
and }claim/loss sizes follow  the gamma distribution $J \sim \mathsf{Gamma}(\gamma,\eta)$ with the rate parameter $\gamma$ and shape parameter $\eta$, i.e.,%
\begin{equation*}
j\left( \xi \right) 
=
\frac{\gamma ^{\eta }\xi ^{\eta -1}\ee^{-\gamma \xi}}{\left( \eta -1\right) !},
\qquad 
\gamma >0, \eta \geq 1.
\end{equation*}%
Then, we have the transforms 
\begin{align*}
a  \ &\rightarrow \  \theta\Bigl(1 + \frac{\nu}{\gamma}\Bigr)^{-\eta} \frac{\beta}{\beta - B(t)} a; \\
\rho  \ &\rightarrow \   \psi \frac{\alpha}{\alpha - B(t) } ~ \rho;\\
J \sim \mathsf{Gamma}(\gamma,\eta)\ &\rightarrow \  \mathsf{Gamma}(\gamma+\nu, \eta); \\
H \sim \mathsf{Exp}(\alpha) \ &\rightarrow \ 
\mathsf{Exp}(\lambda_h); \\
G \sim \mathsf{Exp}(\beta)  \ &\rightarrow \ 
\mathsf{Exp}(\lambda_g);
\end{align*}
where 
\begin{equation*}
\lambda_h := \frac{\alpha - B(t)}{\theta \left(1 + \frac{\nu}{\gamma}\right)^{-\eta} \frac{\beta}{\beta-B(t)}},
\qquad 
\lambda_g := \frac{\beta - B(t)}{\theta \left(1 + \frac{\nu}{\gamma}\right)^{-\eta} \frac{\beta}{\beta-B(t)}},
\qquad \nu\in (-\gamma,0).
\end{equation*}
\end{corollary}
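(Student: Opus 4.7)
The plan is to specialise the generic transformation rules recorded in \cref{thm:new-lambda_2} by computing the three Laplace transforms $\hat{h}$, $\hat{g}$, $\hat{j}$ explicitly for the assumed exponential and gamma distributions, then substituting into the transforms for $a$, $\rho$, and the three densities. This is essentially a bookkeeping exercise, but one must be careful about (i) the normalising factor $\theta\hat{j}(\nu)\hat{g}(-B(t))$ that rescales the self-exciting and external-exciting jump sizes, and (ii) the admissible range of $\nu$ needed so that the transformed claim distribution remains a bona fide gamma density.

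First I would compute the Laplace transforms. For $H\sim\mathsf{Exp}(\alpha)$, $\hat{h}(s)=\alpha/(\alpha+s)$, hence $\hat{h}(-B(t))=\alpha/(\alpha-B(t))$; similarly $\hat{g}(-B(t))=\beta/(\beta-B(t))$. For $J\sim\mathsf{Gamma}(\gamma,\eta)$, the standard computation gives $\hat{j}(s)=(1+s/\gamma)^{-\eta}$ so that $\hat{j}(\nu)=(1+\nu/\gamma)^{-\eta}$, which is finite and positive exactly when $\nu>-\gamma$; combined with the market assumption $\nu<0$ this yields $\nu\in(-\gamma,0)$. Substituting these into the transforms $a\to\theta\hat{j}(\nu)\hat{g}(-B(t))a$ and $\rho\to\psi\hat{h}(-B(t))\rho$ from \cref{thm:new-lambda_2} immediately produces the first two displayed transformations.

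Next I would identify the transformed densities. For the claim size,
\begin{equation*}
\tilde{j}(\xi)=\frac{\ee^{-\nu\xi}}{\hat{j}(\nu)}j(\xi)=\Bigl(1+\tfrac{\nu}{\gamma}\Bigr)^{\eta}\frac{\gamma^{\eta}\xi^{\eta-1}\ee^{-(\gamma+\nu)\xi}}{(\eta-1)!}=\frac{(\gamma+\nu)^{\eta}\xi^{\eta-1}\ee^{-(\gamma+\nu)\xi}}{(\eta-1)!},
\end{equation*}
which is the density of $\mathsf{Gamma}(\gamma+\nu,\eta)$, valid precisely under $\nu\in(-\gamma,0)$. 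For the self-exciting and external-exciting jumps, exponentiation gives $\tilde{g}(y;t)=(\beta-B(t))\ee^{-(\beta-B(t))y}$ and $\tilde{h}(x;t)=(\alpha-B(t))\ee^{-(\alpha-B(t))x}$, both bona fide exponential densities under the tacit assumption that $B(t)<\min(\alpha,\beta)$ on the time horizon (which is where the Laplace transforms are defined anyway).

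Finally, \cref{thm:new-lambda_2} prescribes that these densities be rescaled by the factor $c(t):=\theta\hat{j}(\nu)\hat{g}(-B(t))=\theta(1+\nu/\gamma)^{-\eta}\beta/(\beta-B(t))$ via $h(v)\mapsto \tilde{h}(v/c(t);t)/c(t)$ and analogously for $g$. A direct substitution shows that the rescaled density is exponential with rate $(\alpha-B(t))/c(t)=\lambda_h$ (and similarly $(\beta-B(t))/c(t)=\lambda_g$ for the self-exciting jumps), matching the stated $\lambda_h,\lambda_g$. I do not anticipate a genuine obstacle: the main care is in tracking the rescaling factor $c(t)$ consistently and stating the admissible range $\nu\in(-\gamma,0)$ that ensures $\hat{j}(\nu)$ is finite and $\gamma+\nu>0$.
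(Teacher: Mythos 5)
Your proof is correct and follows essentially the same route as the paper: compute $\hat{h}(-B(t))$, $\hat{g}(-B(t))$, $\hat{j}(\nu)$ for the exponential and gamma laws, identify $\tilde{h},\tilde{g},\tilde{j}$, and feed these into the transform rules of \cref{thm:new-lambda_2}, tracking the rescaling factor $\theta\hat{j}(\nu)\hat{g}(-B(t))$ to obtain $\lambda_h,\lambda_g$. You are in fact slightly more careful than the paper's proof on two minor points: you state explicitly why $\nu\in(-\gamma,0)$ is the admissible range, and you correctly name the transformed claim distribution as $\mathsf{Gamma}(\gamma+\nu,\eta)$ (the paper's proof text contains the typo $\mathsf{Gamma}(\gamma,\eta+\nu)$, though its displayed density is correct).
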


\begin{proof}
In particular for exponential distributions, $H \sim \mathsf{Exp}(\alpha), G \sim \mathsf{Exp}(\beta)$, i.e.,
\begin{equation*}
h(x) =\alpha \ee^{-\alpha x},
\qquad
g(y) =\beta \ee^{-\beta y},
\qquad \alpha,\beta >0,
\end{equation*}
we have
\begin{equation*}
\hat{g}(-B(t)) = \frac{\beta}{\beta -B(t)},
\qquad
\hat{h}(-B(t)) = \frac{\alpha}{\alpha -B(t)},
\end{equation*}
\begin{equation*}
\tilde{g}(y)
= (\beta -B(t))    \ee^{-(\beta -B(t)) y},
\qquad
\tilde{h}(x)
=
(\alpha -B(t))    \ee^{-(\alpha -B(t)) x}.
\end{equation*}
Similarly, for gamma distribution $J \sim \mathsf{Gamma}(\gamma,\eta)$, we have the transform
to $\mathsf{Gamma}(\gamma+\nu,\eta)$, since
\begin{align*}
j\left( \xi \right)
&=
\frac{\gamma ^{\eta }\xi ^{\eta -1}\ee^{-\gamma \xi}}{\left( \eta -1\right) !},
\qquad \gamma >0, \eta \geq 1,
\qquad
\hat{j}(\nu) = \left(\frac{\gamma}{\gamma + \nu} \right)^{\eta},
\\
\tilde{j}(\xi )
&=
\frac{\ee^{-\nu \xi}}{\hat{j}(\nu)}    j (\xi )
~~=~~
\frac{\ee^{-\nu \xi}}{\hat{j}(\nu)}    \frac{\gamma ^{\eta }\xi ^{\eta -1}\ee^{-\gamma \xi}}{\left( \eta -1\right) !}
~~=~~
        \frac{ (\gamma + \nu)^{\eta} \xi ^{\eta -1}   \ee^{-(\gamma+\nu) \xi} }{\left( \eta -1\right) !}.
\end{align*}
Finally, we obtain the results by applying the transforms in \cref{thm:new-lambda_2}.
\end{proof}

\bigskip

In fact, the original distributional structure of the underlying risk process via this change of measure is preserved.
We may also need the stationary condition for a CDCP under the new measure, i.e.,
\begin{equation*}
\delta
~~>~~
\frac{1}{\lambda_g}
~~=~~
\theta \left( \frac{\gamma}{\gamma+\nu}\right)^{\eta} \frac{\beta}{(\beta - B(t))^2},
\end{equation*}
where $\delta$ and $\frac{1}{\lambda_g}$ are the decay rate and the mean of exponentially distributed self-excited jump sizes  under the new measure for a CDCP, respectively, according to \cref{corollary_case_exp_gamma}.
So, for implementations, we have to set the decay rate $\delta$ relatively large together with the condition $\delta  >\theta ~  \hat{j}(\nu)  \mu_G =\theta ~  \hat{j}(\nu) \frac{1}{\beta}$ required in \cref{thm:ODEs} for the existence and uniqueness of the solutions of the ODEs to hold.
This constraint on $\delta$ is required in our model. In other words, the model may not be applicable if this constraint on $\delta$ does not hold. This may represent a limitation of the model. \\

Note that, for the ODE of $B(t)$, we have
\begin{equation*}
f_1 (B) = \delta B \left( \frac{B  - \left( \beta - \frac{1}{\delta} \theta ~ \hat{j}(\nu) \right) }{B-\beta} \right),
\end{equation*}
where $B^+:=\beta - \frac{1}{\delta} \theta ~ \hat{j}(\nu)  >0$ or $\delta \beta   > \theta ~ \hat{j}(\nu)$.
Define
\begin{equation} \label{eq:def_function_G_exp}
\mathcal{G}(B) := \int^{B}_b  \frac{\dd u}{f_1(u)}, \qquad B\in \left(b, B^+  \right).
\end{equation}
Note that $\mathcal{G}(B)$ is a strictly increasing function and
\begin{equation*}
\lim_{B \downarrow b}\mathcal{G}(B) = 0,
\qquad
\lim_{B \uparrow B^+ }\mathcal{G}(B) = \infty.
\end{equation*}
We have its inverse as the solution to $B(t)$ as
\begin{equation*}
B(t) \ = \ \mathcal{G}^{-1}(t), \qquad t \geq0.
\end{equation*}
Finally, we have the following three types of parameter settings:
\begin{enumerate}
  \item If $\alpha \geq B^+$, then $B(t) \in (0, \alpha)$ for any time $t \geq0$;

  \item If $\alpha \in (b,B^+)$, then $B(t) \in (0, \alpha)$ for $t \in [0,  t^* )$ where $t^*:=\mathcal{G}(\alpha) >0$ from \cref{eq:def_function_G_exp};

  \item If $\alpha \in (0,b]$, then $B(t) \in (0, \alpha)$ there is no solution for $B(t) \in (0, \alpha),t\geq 0$;
\end{enumerate}
where $b>0, B^+=\beta - \frac{1}{\delta} \theta ~ \hat{j}(\nu) > 0$.
Note that the function $\mathcal{G}(\cdot)$ in \cref{eq:def_function_G_exp} shall be derived fully analytically. However, its inverse $\mathcal{G}^{-1}(\cdot)$ has to be solved numerically.
\\

The first type of parameter setting (i.e., $\alpha \geq B^+$) is the most ideal.
For numerical examples, it is suggested that, $b$ is chosen to be small and close to zero, and $\alpha$ is relatively large.
\\

In particular if $\beta \rightarrow \infty$, then, all self-exciting jump sizes approach to zero, which degenerates to the pure shot-noise intensity model of \cite{dassios2003pricing}.

\begin{corollary}[Expectations for exponential/gamma distributions]
When $H \sim \mathsf{Exp}(\alpha)$, $G \sim \mathsf{Exp}(\beta)$, and $J \sim \mathsf{Gamma}(\gamma,\eta)$, the expectations are as follows:
\begin{equation*}
\begin{aligned}
\tilde{\mathbb{E}}[ \lambda_t \mid \lambda_0]\ &= \ \lambda_0 \ee^{-\int_0^t I(s) \dd s} 
+ \ee^{-\int_0^t I(s) \dd s} \theta \left( \frac{\gamma }{\gamma +\nu }\right) ^{\eta }  
    \\
&\quad \times \int_0^t \ee^{\int_0^s I(u) \dd u}  \left( \frac{\beta}{\beta -B(s)}\right) 
     \left\{ \psi \alpha \rho \left( \frac{1}{\alpha -B(s)}\right)^{2} + a \delta \right\} \dd s.
\end{aligned}
\end{equation*}
\begin{equation*}
\tilde{\mathbb{E}}[ N_t \mid \lambda_0 ] = \int_{0}^{t} \tilde{\mathbb{E}}[ \lambda_{s} \mid \lambda_{0} ] \dd s
\end{equation*}
\begin{equation*}
\tilde{\mathbb{E}}[ C_t \mid \lambda_0 ] = \mu_{\widetilde{J}} \int_0^t \tilde{\mathbb{E}}[ \lambda_s \mid \lambda_0 ] \dd s
\end{equation*}
where $
I(s) = \delta - \left[\theta \left( \frac{\gamma }{\gamma +\nu }\right)^{\eta} \left( \frac{\beta }{\beta -B(s) }\right) \right]/[\beta -B(s)]
$
and $\mu_{\widetilde{J}} = \eta/(\gamma + \nu)$.
\end{corollary}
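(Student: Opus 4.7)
The plan is to recognize that this is a direct specialization of \cref{thm:tdlamda_mean} to the measure $\mathbb{P}^{\ast}$, using the explicit transforms spelled out in the preceding corollary (on exponential/gamma distributions). In other words, once we know which time-inhomogeneous parameters the CDCP acquires under $\mathbb{P}^{\ast}$, the three expectations follow by substitution into \cref{mean-lambda-td,mean-N-td,mean-C-td}.

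First I would read off, from the exponential/gamma transformation table, the time-varying coefficients under $\mathbb{P}^{\ast}$:
\begin{equation*}
a^{\ast}(t) = \theta \Bigl(\tfrac{\gamma}{\gamma+\nu}\Bigr)^{\eta}\tfrac{\beta}{\beta - B(t)}\, a,
\qquad
\rho^{\ast}(t) = \psi\,\tfrac{\alpha}{\alpha - B(t)}\,\rho,
\end{equation*}
together with the fact that under $\mathbb{P}^{\ast}$, $H$ becomes $\mathsf{Exp}(\lambda_h)$ and $G$ becomes $\mathsf{Exp}(\lambda_g)$ with rates given in the previous corollary. Inverting those rates gives the new first moments
\begin{equation*}
\mu_{H}^{\ast}(t) = \tfrac{1}{\lambda_h} = \theta\Bigl(\tfrac{\gamma}{\gamma+\nu}\Bigr)^{\eta}\tfrac{\beta}{(\beta-B(t))(\alpha-B(t))},
\qquad
\mu_{G}^{\ast}(t) = \tfrac{1}{\lambda_g} = \theta\Bigl(\tfrac{\gamma}{\gamma+\nu}\Bigr)^{\eta}\tfrac{\beta}{(\beta-B(t))^{2}}.
\end{equation*}
Consequently, the time-varying mean-reversion coefficient is
\begin{equation*}
\kappa^{\ast}(t) = \delta - \mu_{G}^{\ast}(t) = \delta - \frac{\theta\bigl(\tfrac{\gamma}{\gamma+\nu}\bigr)^{\eta}\tfrac{\beta}{\beta-B(t)}}{\beta - B(t)} = I(t),
\end{equation*}
which is precisely the function $I(s)$ appearing in the statement.

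Next I would plug $a^{\ast}(t),\rho^{\ast}(t),\mu_{H}^{\ast}(t),\kappa^{\ast}(t)$ into \cref{mean-lambda-td} applied under $\mathbb{P}^{\ast}$. A short computation gives
\begin{equation*}
\rho^{\ast}(s)\mu_{H}^{\ast}(s) + a^{\ast}(s)\delta
= \theta\Bigl(\tfrac{\gamma}{\gamma+\nu}\Bigr)^{\eta}\tfrac{\beta}{\beta-B(s)}\left\{\tfrac{\psi\alpha\rho}{(\alpha-B(s))^{2}} + a\delta\right\},
\end{equation*}
and pulling the constant factor $\theta(\gamma/(\gamma+\nu))^{\eta}$ outside the integral yields the displayed formula for $\mathbb{E}^{\ast}[\lambda_t \mid \lambda_0]$. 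The identities for $\mathbb{E}^{\ast}[N_t \mid \lambda_0]$ and $\mathbb{E}^{\ast}[C_t \mid \lambda_0]$ then follow immediately from \cref{mean-N-td,mean-C-td}, with $\mu_{\widetilde{J}} = \eta/(\gamma+\nu)$ because the transformed claim-size distribution is $\mathsf{Gamma}(\gamma+\nu,\eta)$ (whose mean is $\eta/(\gamma+\nu)$).

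The only genuine sticking point is bookkeeping: making sure the various time-dependent factors $\beta/(\beta - B(t))$, $\alpha/(\alpha - B(t))$, $(\gamma/(\gamma+\nu))^{\eta}$ and the rescaling of jump-size rates are combined correctly so that $\kappa^{\ast}(t)$ reduces to $I(t)$ and $\rho^{\ast}\mu_H^{\ast} + a^{\ast}\delta$ matches the bracketed term in the statement. Implicit throughout is the validity of the time-inhomogeneous machinery of \cref{thm:tdlamda_mean}, which applies under $\mathbb{P}^{\ast}$ since, by \cref{thm:new-generator,thm:new-lambda,thm:new-lambda_2}, the process $(\lambda_t,N_t,C_t)$ under $\mathbb{P}^{\ast}$ is exactly a time-inhomogeneous CDCP with the transformed coefficients above.
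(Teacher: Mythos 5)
Your proposal is correct and follows essentially the same route as the paper: the paper's proof of this corollary is a one-line statement that the result follows by substituting the transformed parameters from the exponential/gamma corollary into \cref{thm:tdlamda_mean}, and you carry out exactly that substitution with the bookkeeping spelled out. In particular, your identification $\kappa^{\ast}(t)=\delta-\mu_G^{\ast}(t)=I(t)$, the factorization of $\rho^{\ast}\mu_H^{\ast}+a^{\ast}\delta$, and the mean $\mu_{\widetilde{J}}=\eta/(\gamma+\nu)$ of the tilted $\mathsf{Gamma}(\gamma+\nu,\eta)$ distribution all match what the paper implicitly relies on.
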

\begin{proof}
These follow from substituting the transformed values into \cref{thm:tdlamda_mean}.
\end{proof}

For illustrative purposes, we have used a Gamma distribution for claim sizes, which yields a closed-form expression under the Esscher transform, $\tilde{J} \sim \mathsf{Gamma}(\gamma+\nu,\eta)$.
The Gamma family is a standard choice in actuarial modeling of claim severities \citep[Page~7]{buhlmann1970mathematical} and offers analytical tractability that facilitates clear interpretation of the pricing mechanism. While heavier-tailed alternatives, such as the Pareto distribution,
\[
j(\xi) = \frac{\Gamma(\omega + k)\, \zeta^\omega \, \xi^{k-1}}
{\Gamma(\omega)\, \Gamma(k)\, (\zeta + \xi)^{\omega + k}},
\qquad \omega>0,\ \zeta>0,\ k>0,
\]
are commonly employed to model large insurance losses
\citep[Chapters~3--4]{albrecher2017reinsurance},
their form is not preserved under the Esscher transform and instead acquires an exponential tilt. Nevertheless, the corresponding Laplace transform $\hat{j}(\nu)$ exists under suitable
parameter conditions and can be evaluated numerically. Heavier-tailed claim size distributions typically lead to higher premium levels, reflecting increased tail risk. In this sense, the
Gamma specification, including the exponential distribution as a special case, serves as a benchmark for assessing the impact of tail heaviness on catastrophe reinsurance pricing. The results presented here therefore illustrate the qualitative behavior of the model, while extensions to Pareto or other heavy-tailed distributions may be explored in future work via numerical sensitivity analysis.

\section{Numerical results}\label{Sec_Numerical}

In this section, we provide arbitrage-free catastrophe stop-loss reinsurance premiums via the Monte Carlo simulation method.
We also examine the sensitivity analyses for the retention level, the Esscher parameters and the intensity parameters.

\subsection{Pricing of a stop-loss reinsurance contract} \label{ssec:pricing-stoploss}

The total loss excess over $L$, which is a retention limit, up to time $t$ is
$(C_{t}-L)^+$. Then the stop-loss reinsurance gross premium at time 0 (i.e. under $\tilde{\mathbb{P}}$) is 
\begin{equation} \label{stoploss}
        \tilde{\mathbb{E}}[(C_t - L)^+]. %:= \int_{0}^\infty (c - A) \mathbb{P}^{\ast}(\dd c) \,.
\end{equation}
Its Monte Carlo estimate will be denoted by 
$\widehat{\tilde{\mathbb{E}}}[(C_t - L)^+]$, and the corresponding estimate under $\mathbb{P}$ will be denoted by $\widehat{\mathbb{E}}[(C_t - L)^+]$.
For this section, we consider
\begin{equation*}
\alpha = 2, \beta = 1, \eta = 3, \gamma = 0.4 \,,
\end{equation*}
\begin{equation*}
\delta = 3, \rho = 4, a = 1, \lambda_0 = 1, t = 1 \,,
\end{equation*}
\begin{equation*}
\nu = -0.05, \theta = 1.25, \psi = 1.25, \phi = - (\theta \hat{j}(\nu) - 1) \,.
\end{equation*}

The parameter values adopted in the numerical examples are chosen for illustrative purposes. They are used to demonstrate the qualitative behaviour of the model, as well as the impacts of contagion and risk loading on reinsurance premiums, rather than to provide an empirical calibration to a specific portfolio. Similar parameter ranges are commonly adopted in the dynamic contagion and self-exciting point process literature to study clustering effects and catastrophic risk (\cite{dassios2011dynamic}; Jang and Oh (\citeyear{jang2021review}, \citeyear{JangOh2025})).

In practice, model parameters may be estimated from historical catastrophe or large-loss data
using standard inference techniques for point processes with latent or stochastic intensities, including likelihood-based methods, moment-based estimators, and filtering approaches. General statistical frameworks for point process inference are discussed in
\cite{Karr1991} and \cite{DaleyVereJones2008}, while modern likelihood-based methodologies for self-exciting and Hawkes-type processes are presented in \cite{Laub2022}. Related filtering approaches in insurance risk models are studied in \cite{DassiosJang2005} for shot-noise Cox processes, and similar ideas may be adapted to the dynamic contagion setting considered here.

While maximum likelihood estimators for point process parameters and some of their statistical properties have been studied, the estimation of shot-noise Cox processes, and in particular the analysis of their asymptotic properties, has received comparatively less attention. These challenges are further compounded when time-varying parameters are present, suggesting promising opportunities for future research, particularly in the context of dynamic contagion models for insurance risk.

For each combination of parameters, $10^5$ crude Monte Carlo (CMC) samples of $C_t$ under $\tilde{\mathbb{P}}$ have been used to estimate \cref{stoploss} in \cref{tbl:dcp}. The corresponding values under $\mathbb{P}$ are also provided in the same table.

\begin{figure}[h]
    \centering
    \includegraphics[width=0.8\textwidth]{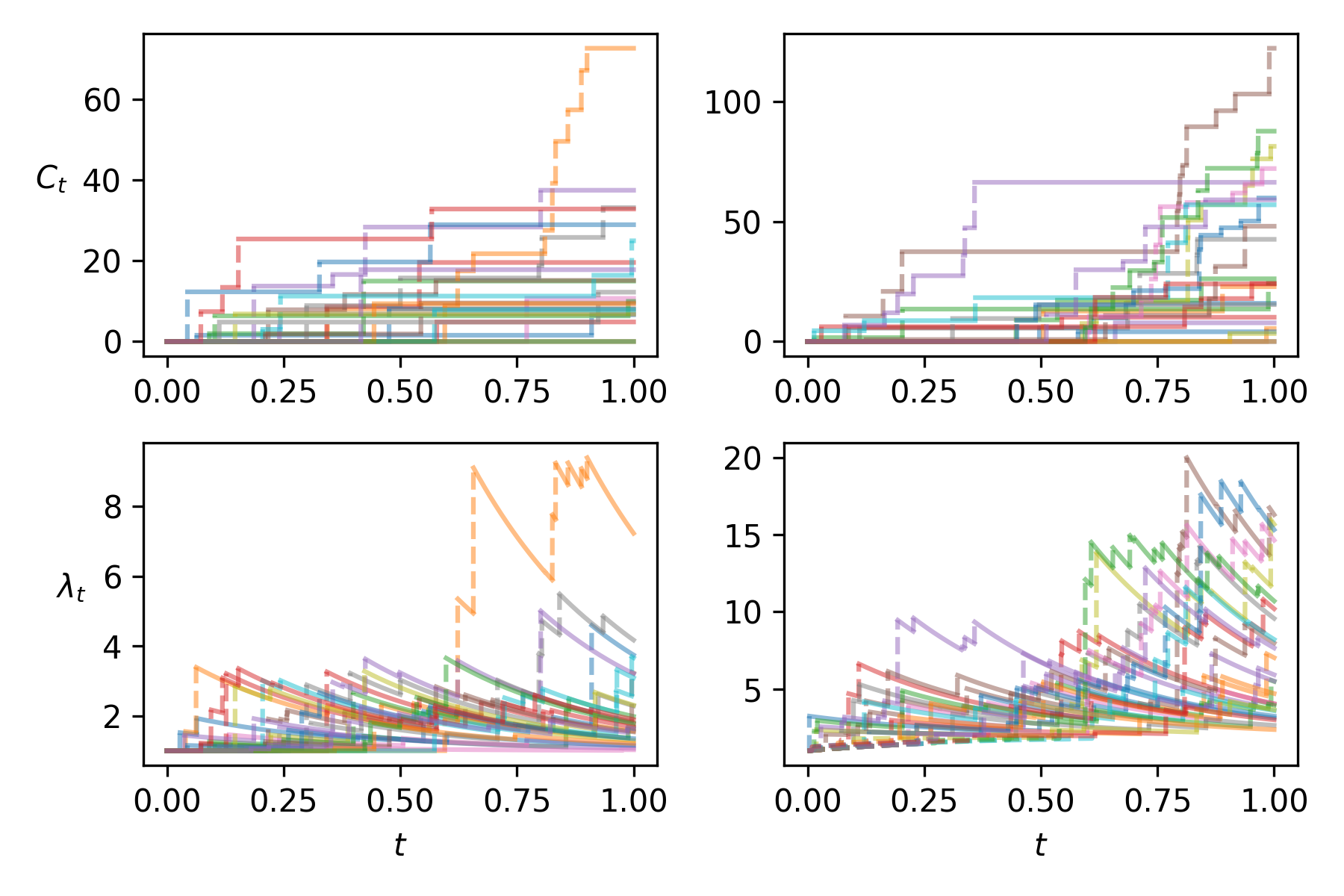}
    \caption{A collection of 25 sample paths of $C_t$ and the corresponding $\lambda_t$ for the compound dynamic contagion process under the original measure $\mathbb{P}$ (left column) and the tilted measure $\tilde{\mathbb{P}}$ (right column) given the constants outlined above. It indicates that paths under $\tilde{\mathbb{P}}$ exhibit higher intensities and larger claims, reflecting risk loading and increased emphasis on extreme losses.}
    \label{Figure 1}
\end{figure}

\begin{table}[H] 
\centering
    \begin{tabular}{rrr}
    \toprule
    $L$ & $\widehat{\mathbb{E}}[(C_t - L)^+]$ & $\widehat{\tilde{\mathbb{E}}}[(C_t - L)^+]$ \\
    \midrule
    0.00 & 13.867646 & 37.635418 \\
    25.00 & 2.556228 & 18.752166 \\
    37.64 & 1.011775 & 12.792098 \\
    50.00 & 0.406231 & 8.738347 \\
    75.00 & 0.062353 & 4.023879 \\
    100.00 & 0.010058 & 1.852134 \\
    \bottomrule
    \end{tabular}
    \caption{The stop-loss CMC estimates  $\widehat{\mathbb{E}}[(C_t - L)^+]$ under the original measure and  $\widehat{\tilde{\mathbb{E}}}[(C_t - L)^+]$ under the Esscher measure for different $L$ retention levels.}
    \label{tbl:dcp}
\end{table}

\Cref{tbl:dcp} shows that arbitrage-free catastrophe stop-loss reinsurance (i.e. gross) premiums are significantly higher than the corresponding net premiums (see \cref{Figure 1}). Depending on the choice of parameters, the gross premiums could be either lowered or elevated. It is not the purpose of this paper to determine which premium is the most appropriate; rather, the insurance companies' risk preferences dictate which equivalent martingale probability measure should be adopted. 

The simplest way to obtain a gross premium is through the use of the parameter~$\theta$, a well-known concept in insurance where a security loading is added to the expected value under the original probability measure~$\mathbb{P}$. An appealing feature of the Esscher transform is that it guarantees the existence of at least one equivalent martingale probability measure in incomplete markets.

\subsection{Sensitivity analysis for the Esscher parameters}

To examine the sensitivity analyses for the Esscher parameters, we focus on the mean, $\widehat{\tilde{\mathbb{E}}}[C_t]$ and the stop-loss retention $L = 25$ for the above combination of parameters. Hence we have:
\begin{equation*}
\widehat{\tilde{\mathbb{E}}}[C_t] = 38.152252  \quad \text{ and } \quad  \widehat{\tilde{\mathbb{E}}}[(C_t - 25)^+] = 19.153988 \,,
\end{equation*}
where $\widehat{\tilde{\mathbb{E}}}[C_t]$ itself can be used for arbitrage-free catastrophe insurance premiums.

The following tables show the way in which these values change when the Esscher parameters are adjusted individually, with all other parameters held constant (except $\phi$ which is updated so that it always satisfies $\phi = - (\theta \hat{j}(\nu) - 1)$).

\subsubsection{\texorpdfstring{Changing $\theta$}{Changing theta}}

\begin{table}[H]
    \centering
    \begin{tabular}{rrrr}
    \toprule
    $\theta$ & $\tilde{\mathbb{E}}[C_t]$ & $\widehat{\tilde{\mathbb{E}}}[C_t]$ & $\widehat{\tilde{\mathbb{E}}}[(C_t - 25)^+]$ \\
    \midrule
    1.00 & 28.137195 & 28.12 ± 0.16 & 11.27 ± 0.13 \\
    1.25 & 37.757126 & 37.64 ± 0.21 & 18.75 ± 0.18 \\
    1.50 & 49.413007 & 49.25 ± 0.27 & 28.80 ± 0.25 \\
    1.75 & 63.671363 & 63.58 ± 0.35 & 41.95 ± 0.33 \\
    \bottomrule
    \end{tabular}
    \caption{The expected loss $\tilde{\mathbb{E}}[C_t]$, the CMC loss $\widehat{\tilde{\mathbb{E}}}[C_t]$ and stop-loss $\widehat{\tilde{\mathbb{E}}}[(C_t - 25)^+]$ estimates (with 95\% confidence intervals), for different values of the tilting parameter $\theta$.}
    \label{tbl:change-theta}
\end{table}

\subsubsection{\texorpdfstring{Changing $\psi$}{Changing psi}}

\begin{table}[H]
    \centering
    \begin{tabular}{rrrr}
    \toprule
    $\psi$ & $\tilde{\mathbb{E}}[C_t]$  & $\widehat{\tilde{\mathbb{E}}}[C_t]$ & $\widehat{\tilde{\mathbb{E}}}[(C_t - 25)^+]$ \\
    \midrule
    1.00 & 34.775140 & 34.84 ± 0.21 & 16.69 ± 0.17 \\
    1.25 & 37.757126 & 37.64 ± 0.21 & 18.75 ± 0.18 \\
    1.50 & 40.737259 & 40.67 ± 0.22 & 21.13 ± 0.19 \\
    1.75 & 43.718045 & 43.83 ± 0.23 & 23.66 ± 0.21 \\
    \bottomrule
    \end{tabular}
    \caption{The expected loss $\tilde{\mathbb{E}}[C_t]$, the CMC loss $\widehat{\tilde{\mathbb{E}}}[C_t]$ and stop-loss $\widehat{\tilde{\mathbb{E}}}[(C_t - 25)^+]$ estimates (with 95\% confidence intervals), for different values of the tilting parameter $\psi$.}
    \label{tbl:change-psi}
\end{table}

\subsubsection{\texorpdfstring{Changing $\nu$}{Changing nu}}

\begin{table}[H]
    \centering
    \begin{tabular}{rrrr}
    \toprule
    $\nu$ & $\tilde{\mathbb{E}}[C_t]$ & $\widehat{\tilde{\mathbb{E}}}[C_t]$ & $\widehat{\tilde{\mathbb{E}}}[(C_t - 25)^+]$ \\
    \midrule
    -0.01 & 22.322487 & 22.39 ± 0.13 & 7.13 ± 0.09 \\
    -0.05 & 37.757126 & 37.64 ± 0.21 & 18.75 ± 0.18 \\
    -0.08 & 62.101734 & 61.90 ± 0.34 & 40.47 ± 0.32 \\
    -0.10 & 93.448189 & 93.15 ± 0.52 & 70.32 ± 0.51 \\
    \bottomrule
    \end{tabular}
    \caption{The expected loss $\tilde{\mathbb{E}}[C_t]$, the CMC loss $\widehat{\tilde{\mathbb{E}}}[C_t]$ and stop-loss $\widehat{\tilde{\mathbb{E}}}[(C_t - 25)^+]$ estimates (with 95\% confidence intervals), for different values of the tilting parameter $\nu$.}
    \label{tbl:change-nu}
\end{table}

Comparing the numerical values in \cref{tbl:change-theta} with those in \cref{tbl:change-psi}, the increases in gross insurance and reinsurance premium estimates due to changes in $\theta$ are greater than those resulting from changes in $\psi$. This is because $\hat{j}(\nu)$ is involved in the self-exciting intensity function $\lambda_{t}$. \Cref{tbl:change-nu} illustrates the increase in gross insurance and reinsurance premium estimates resulting from changes in $\nu$, which is due to both a higher self-exciting intensity and larger claim/loss sizes.

\subsection{Sensitivity analysis for the intensity parameters}

We next examine sensitivity with respect to the intensity parameters governing the
dynamics of the claim arrival process. Using the same setup (i.e.\ the same baseline
parameter values) as in the previous subsection, we focus on the mean
$\widehat{\tilde{\mathbb{E}}}[C_t]$ and the stop-loss premium with retention $L = 25$.

The following tables illustrate how these quantities change when key intensity
parameters are varied individually, with all other parameters held constant
(except $\phi$, which is updated to satisfy $\phi = -(\theta \hat{j}(\nu) - 1)$).

\subsubsection{\texorpdfstring{Changing $\delta$}{Changing delta}}

\begin{table}[H]
    \centering
    \begin{tabular}{rrrr}
    \toprule
    $\delta$ & $\tilde{\mathbb{E}}[C_t]$ & $\widehat{\tilde{\mathbb{E}}}[C_t]$ & $\widehat{\tilde{\mathbb{E}}}[(C_t - 25)^+]$ \\
    \midrule
    2.00 & 43.634528 & 43.60 ± 0.26 & 24.23 ± 0.23 \\
    3.00 & 37.757126 & 37.64 ± 0.21 & 18.75 ± 0.18 \\
    5.00 & 32.620608 & 32.69 ± 0.18 & 14.26 ± 0.14 \\
    7.00 & 58.289065 & 58.21 ± 0.37 & 37.36 ± 0.35 \\
    \bottomrule
    \end{tabular}
    \caption{The expected loss $\tilde{\mathbb{E}}[C_t]$, the CMC loss $\widehat{\tilde{\mathbb{E}}}[C_t]$ and stop-loss $\widehat{\tilde{\mathbb{E}}}[(C_t - 25)^+]$ estimates (with 95\% confidence intervals), for different values of the tilting parameter $\delta$.}
    \label{tbl:change-delta}
\end{table}

\subsubsection{\texorpdfstring{Changing $\alpha$}{Changing alpha}}

\begin{table}[H]
    \centering
    \begin{tabular}{rrrr}
    \toprule
    $\alpha$ & $\tilde{\mathbb{E}}[C_t]$  & $\widehat{\tilde{\mathbb{E}}}[C_t]$ & $\widehat{\tilde{\mathbb{E}}}[(C_t - 25)^+]$ \\
    \midrule
    1.00 & 53.134574 & 53.19 ± 0.27 & 32.01 ± 0.25 \\
    2.00 & 37.757126 & 37.64 ± 0.21 & 18.75 ± 0.18 \\
    3.00 & 32.736742 & 32.74 ± 0.20 & 15.03 ± 0.16 \\
    4.00 & 30.246361 & 30.30 ± 0.19 & 13.28 ± 0.15 \\
    \bottomrule
    \end{tabular}
    \caption{The expected loss $\tilde{\mathbb{E}}[C_t]$, the CMC loss $\widehat{\tilde{\mathbb{E}}}[C_t]$ and stop-loss $\widehat{\tilde{\mathbb{E}}}[(C_t - 25)^+]$ estimates (with 95\% confidence intervals), for different values of the tilting parameter $\alpha$.}
    \label{tbl:change-alpha}
\end{table}

\subsubsection{\texorpdfstring{Changing $\beta$}{Changing beta}}

\begin{table}[H]
    \centering
    \begin{tabular}{rrrr}
    \toprule
    $\beta$ & $\tilde{\mathbb{E}}[C_t]$ & $\widehat{\tilde{\mathbb{E}}}[C_t]$ & $\widehat{\tilde{\mathbb{E}}}[(C_t - 25)^+]$ \\
    \midrule
    1.00 & 37.757126 & 37.64 ± 0.21 & 18.75 ± 0.18 \\
    1.50 & 31.529562 & 31.50 ± 0.16 & 13.08 ± 0.13 \\
    2.00 & 29.013963 & 29.04 ± 0.15 & 10.90 ± 0.11 \\
    2.50 & 27.659677 & 27.73 ± 0.14 & 9.77 ± 0.10 \\
    \bottomrule
    \end{tabular}
    \caption{The expected loss $\tilde{\mathbb{E}}[C_t]$, the CMC loss $\widehat{\tilde{\mathbb{E}}}[C_t]$ and stop-loss $\widehat{\tilde{\mathbb{E}}}[(C_t - 25)^+]$ estimates (with 95\% confidence intervals), for different values of the tilting parameter $\beta$.}
    \label{tbl:change-beta}
\end{table}

\Cref{tbl:change-delta} shows the complicated relationship that $\delta$ has on the expected losses.
Changing $\delta$ changes the solution $B(t)$, and that changes nearly all of the Esscher-transformed parameters (except for claim sizes), making the impacts non-monotonic.
\cref{tbl:change-alpha,tbl:change-beta} present the reduction in them as the external-exciting and self-exciting jump size parameters \(\alpha\) and \(\beta\) increase. An increase in these jump size parameters results in a decrease in the exponential average.

\subsection{Comparison with Generalised Hawkes and Shot-noise Cox processes}
If there are no externally excited jumps 
in \cref{eq:infinitesimal_generator} (i.e., $\rho =0$) $C_{t}$
becomes generalised compound Hawkes process. If $a=0$ and there are no
self-excited jumps in \cref{eq:infinitesimal_generator}, $C_{t}$ 
becomes compound Cox process with shot-noise Poisson intensity.

The following two tables show the corresponding stop-loss CMC estimates in \cref{tbl:dcp} (DCP case) for the generalised Hawkes case and the Cox case, respectively. 

\subsubsection{Generalised Hawkes process}

If we simulate using the same parameters as above, except setting $\rho = 0$, then the underlying point process becomes a generalised Hawkes process.

\begin{table}[H] 
    \centering
    \begin{tabular}{rrr}
    \toprule
    $L$ & $\widehat{\mathbb{E}}[(C_t - L)^+]$ & $\widehat{\tilde{\mathbb{E}}}[(C_t - L)^+]$ \\
    \midrule
    0.00 & 9.643789 & 22.869498 \\
    25.00 & 1.251784 & 8.592431 \\
    37.64 & 0.441185 & 5.278698 \\
    50.00 & 0.160856 & 3.318908 \\
    75.00 & 0.019958 & 1.351518 \\
    100.00 & 0.002533 & 0.573990 \\
    \bottomrule
    \end{tabular}
    \caption{The stop-loss CMC estimates  $\widehat{\mathbb{E}}[(C_t - L)^+]$ under the original measure and  $\widehat{\tilde{\mathbb{E}}}[(C_t - L)^+]$ under the Esscher measure for different $L$ retention levels. The point process driving $C_t$ is a generalised Hawkes process.}
    \label{tbl:hawkes}
\end{table}

\subsubsection{Shot-noise Cox process}

If we simulate using the same parameters as above, except setting $a = 0$ and $\beta = \infty$ (i.e., $Y_j \equiv 0$ for all $j$), then the underlying point process becomes a shot-noise Cox process.

\begin{table}[H] 
    \centering
    \begin{tabular}{rrr}
    \toprule
    $L$ & $\widehat{\mathbb{E}}[(C_t - L)^+]$ & $\widehat{\tilde{\mathbb{E}}}[(C_t - L)^+]$ \\
    \midrule
    0.00 & 5.804331 & 12.216566 \\
    25.00 & 0.231064 & 1.840665 \\
    37.64 & 0.035712 & 0.608449 \\
    50.00 & 0.004792 & 0.192236 \\
    75.00 & 0.000030 & 0.017000 \\
    100.00 & 0.000000 & 0.001662 \\
    \bottomrule
    \end{tabular}
    \caption{The stop-loss CMC estimates  $\widehat{\mathbb{E}}[(C_t - L)^+]$ under the original measure and  $\widehat{\tilde{\mathbb{E}}}[(C_t - L)^+]$ under the Esscher measure for different $L$ retention levels. The point process driving $C_t$ is a shot-noise Cox process.}
    \label{tbl:cox}
\end{table}

Compared to \cref{tbl:hawkes,tbl:cox}, \cref{tbl:dcp} shows that gross insurance and reinsurance premium estimates significantly increase when changing $C_{t}$ from the Cox process with shot-noise Poisson intensity to a generalised Hawkes process due to
self-excited jumps, and to the dynamic contagion process due to both externally excited jumps and self-excited jumps.

\section{Conclusion}\label{Sec_Conclusion}

The insurance sector is an important component of our economy, with trillions of dollars under management. We rely on well-functioning insurers. However, the current pricing mechanism for catastrophic risk insurance has put pressure on insurers and reinsurers when faced with
more frequent and larger natural and man-made disasters. For example, in the aftermath of frequent natural disasters such as the 2022 South-East Queensland and New South Wales floods, many insurers only offer policies that include flood cover with approximately 30\% increased premiums. Before the Southern California wildfires of January 2025, State Farm did not renew insurance policies in high-risk California postcodes due to increasing costs.

Unless (re)insurers have an appropriate framework and prudent methodology which adequately assess catastrophe risks, ordinary individuals and business owners will suffer as the industry will not be able to promptly and fully meet the many claims of catastrophe victims. To address this challenge, in this paper, we presented how to obtain arbitrage-free catastrophe stop-loss reinsurance premiums when dealing with the ongoing challenge and complexity of emerging risk dynamics. To predict claim/loss arrivals from conventional and emerging catastrophes, we used the compound dynamic contagion process for the catastrophic
component of the liability and the Esscher transform to determine arbitrage-free premiums. Sensitivity analyses were also performed by changing the retention level, the Esscher parameters and the intensity parameters. Given the broad applicability of the CDCP, it is expected that what we have obtained in this paper will provide practitioners with feasible approaches to quantify their catastrophic liabilities in light of
the growing challenges posed by new risks arising from climate change, cyberattacks, and pandemics. We also envisage that the dynamics used in
this paper extend naturally for pricing a variety of catastrophe insurance derivatives. For this, we leave them as objects of further research.

\paragraph{Competing Interests}
The authors declare none.

\paragraph{Data Availability Statement}
The Python code used to generate the numerical results is available at: \url{https://github.com/Pat-Laub/ReinsuranceEsscherDynamicContagion}.

\paragraph{Funding Statement}
This work received no specific grant from any funding agency, commercial or not-for-profit sectors.

\bibliographystyle{apa}

\newpage
\appendix
\renewcommand{\thesection}{\Alph{section}}

\section{Equivalent martingale measure with non-zero interest rate} \label{EMM Non-zero}

\begin{definition}[Equivalent martingale measure with discounting]
\label{Emm_discounted}
Let $r \ge 0$ be a constant risk-free interest rate. 
A probability measure $\tilde{\mathbb{P}}$ is said to be an equivalent martingale probability measure 
(i.e., $\tilde{\mathbb{P}}$ is equivalent to $\mathbb{P}$ on ${\cal F}_T$) if it satisfies the following properties:
\begin{enumerate}[label=(\roman*)]
    \item $\tilde{\mathbb{P}}(A)=0$ if and only if $\mathbb{P}(A)=0$ for any $A \in {\cal F}_T$;
    \item The Radon--Nikodym derivative $\frac{\dd \tilde{\mathbb{P}}}{\dd \mathbb{P}} \in L^{2}(\Omega, {\cal F}_{T}, \mathbb{P})$;
    \item The discounted surplus process $\{ \ee^{-rt} R_t \}_{t \in {\cal T}}$ is an $(\mathbb{F}, \tilde{\mathbb{P}})$-martingale. 
    In particular,
    \[
    \tilde{\mathbb{E}}\!\left[ \ee^{-rt} R_t \mid {\cal F}_s \right]
    =
    \ee^{-rs} R_s,
    \qquad 0 \le s \le t \le T,
    \quad \tilde{\mathbb{P}}\text{-a.s.}
    \]
\end{enumerate}
\end{definition}

\begin{remark}
Assuming $R_0=0$, \cref{Emm_discounted} (iii) implies
\[
\tilde{\mathbb{E}}\!\left( \ee^{-rt} R_t \mid \mathcal{F}_0 \right)
=
\tilde{\mathbb{E}}\!\left( \ee^{-rt} (P_t - C_t) \mid \mathcal{F}_0 \right)
=0.
\]
Consequently,
\[
\tilde{\mathbb{E}}\!\left( \ee^{-rt} C_t \mid \mathcal{F}_0 \right)
=
\tilde{\mathbb{E}}\!\left( \ee^{-rt} P_t \mid \mathcal{F}_0 \right),
\]
which characterises the arbitrage-free insurance premium under a non-zero constant interest rate.
\end{remark}

\section{Simulation algorithms} \label{simulation}

\cref{simulate-cdcp} shows our method for simulating the time-inhomogeneous compound dynamic contagion process, which extends the work of \cite{ogata1981lewis}.
It utilises \cref{simulate-poisson} which is the well-known thinning algorithm for Poisson processes (see \citealp{lewis1979simulation}).

\begin{algorithm}[H]
\caption{Simulate a time-inhomogeneous Poisson process}
\begin{algorithmic}[1]
\Procedure{SampleInhomogeneousPoissonProcess}{$T,\; \rho(\cdot)$}
    \State Compute \(\rho_{\max} \gets \max_{t \in [0,T]} \rho(t)\)
    \State Sample \(N \sim \mathrm{Poisson}(\rho_{\max} \cdot T)\)
    \State Draw \(N\) candidate times \(\{t^*_1,\ldots,t^*_N\}\) uniformly in \([0,T]\) and sort them
    \State Initialise \(\mathcal{T} \gets \emptyset\)
    \For{each candidate time \(t^*\) in \(\{t^*_1,\ldots,t^*_N\}\)}
         \State Sample \(U \sim \mathrm{Uniform}(0,1)\)
         \If{\(U < \rho(t^*)/\rho_{\max}\)}
              \State Append \(t^*\) to \(\mathcal{T}\)
         \EndIf
    \EndFor
    \State \Return \(\mathcal{T}\)
\EndProcedure
\end{algorithmic}
\label{simulate-poisson}
\end{algorithm}

\noindent
In the next algorithm we split the terms in \cref{idcp-definition} into
\begin{equation} \label{idcp-definition-split}
\lambda_{t} = \underbrace{\lambda_0 \ee^{-\delta t} + \delta \int_{0}^{t} a(s) \ee^{-\delta (t-s)} \, \dd s + \sum_{i \ge
1} X_{i} \ee^{-\delta \left( t - T_{1, i} \right)}\mathbb{I}_{\{ T_{1,i} \le t \}} }_{=: \lambda^{(c)}_t , \text{ the ``Cox'' part}} + \underbrace{\sum_{j \ge 1} Y_{j} \ee^{-\delta \left(t - T_{2,j} \right)}
\mathbb{I}_{ \{ T_{2, j} \le t \}}}_{=: \lambda^{(h)}_t , \text{ the ``Hawkes'' part}} \,.
\end{equation}

\begin{algorithm}[H]
\caption{Simulate a time-inhomogeneous compound dynamic contagion process}
\begin{algorithmic}[1]
\Procedure{SimulateCDCP}{$\mathcal{T} = [0, T],\; \lambda_0,\; a(t),\; \rho(t),\; \delta,\; G(\cdot;t),\; H(\cdot;t),\; J(\cdot),\; \Delta t_{\text{max}}$}
    \State
    \State \textbf{// Step 1: Sample External (Cox) Arrival Times and Jump Sizes}
    \State Let \(\{t_{1,1}, t_{1,2}, \ldots, t_{1,n}\} \gets \textsc{SampleInhomogeneousPoissonProcess}(T,\; \rho(\cdot))\)
    \ForAll{\(i = 1,2,\ldots, n\)}
         \State Sample \(X_i \sim H(\,\cdot\,; t_{1,i})\)
    \EndFor
    \State
    \State \textbf{// Step 2: Pre-compute the Cox Component of the Intensity}
    \State Compute
    $
    \lambda^{(c)}_t \gets \lambda_0 \ee^{-\delta t} + \delta \int_{0}^{t} a(s) \ee^{-\delta (t-s)} \, \dd s + \sum_{i \ge
1} X_{i} \ee^{-\delta \left( t - T_{1, i} \right)}\mathbb{I}_{\{ T_{1,i} \le t \}}
    $ for all \(t \in \mathcal{T}\)
    \State
    \State \textbf{// Step 3: Simulate Self-Excited (Hawkes) Arrivals via Ogata's Thinning}
    \State Initialise \(\lambda^{(h)}_t \gets 0\) for all \(t \in \mathcal{T}\), \(t \gets 0\) and \(j \gets 0\)
    \Comment {Note that $\lambda_t = \lambda^{(c)}_t + \lambda^{(h)}_t$}
    \State
    \State Set \(t \gets 0\) and \(j \gets 0\)
    \While{\(t \le T\)}
         \State \textbf{// Generate a potential arrival time}
         \State Compute
         $
         \lambda_{\max} \gets \max_{t' \in [t, (t+\Delta t_{\text{max}}) \wedge T]} \  \lambda^{(c)}_{t'} + \lambda^{(h)}_{t'}  
         $
         \State Sample \(\Delta t \sim \mathrm{Exponential}(\lambda_{\max})\)
         \State Update \(t \gets t + (\Delta t \wedge \Delta t_{\text{max}})\)
         \State
         \State \textbf{// Stop simulating if we reach the end of the simulation horizon}
         \If{\(t > T\)} 
              \State \textbf{break}
         \EndIf
         \State
         \State \textbf{// Discard proposals that are too far in the future}
         \If{\(\Delta t > \Delta t_{\text{max}}\)}
              \State \textbf{continue}
         \EndIf
         \State
         \State \textbf{// Keep the proposal with probability $\lambda_t / \lambda_{\text{max}}$}
         \State Sample \(U \sim \mathrm{Uniform}(0,1)\)
         \If{\(U < (\lambda^{(c)}_t + \lambda^{(h)}_t) /\lambda_{\max}\)}
              \State \textbf{// Update various counters \& intensities and sample a jump size}
              \State \(j \gets j + 1\) and \(t_{2,j} \gets t \)
              \State Sample \(Y_j \sim G(\,\cdot\,; t_{2,j} )\) \Comment{Next Hawkes intensity jump}
              \State Update \(\lambda^{(h)}_{t'} \gets \lambda^{(h)}_{t'} + Y_j\, \ee^{-\delta (t' - t_{2,j})}\) for all $t' \in [t_{2,j}, T]$
            
              \State Sample \(\Xi_j \sim J(\,\cdot\,)\) \Comment{Next jump in the CDCP}
         \EndIf
    \EndWhile
    \State
    \State \Return \(C_t = \sum_{j \ge 1} \Xi_j\, \mathbb{I}_{\{t_{2,j} \le t\}}\) for all \(t \in \mathcal{T}\)
\EndProcedure
\end{algorithmic}
\label{simulate-cdcp}
\end{algorithm}

\end{document}